\theoremstyle{plain}
\newtheorem{proposition}{Proposition}[section]
\newtheorem{theorem}{Theorem}[section]
\theoremstyle{definition}
\newtheorem{definition}{Definition}[section]
\theoremstyle{remark}
\newtheorem{remark}{Remark}[section]
\def\BibTeX{{\rm B\kern-.05em{\sc i\kern-.025em b}\kern-.08em
    T\kern-.1667em\lower.7ex\hbox{E}\kern-.125emX}}
\begin{document}

\title{Luré-Postnikov Stability Analysis of Closed-Loop Control Systems with Gated Recurrent Neural Network-based Virtual Sensors}

\author{%
  E.~Hilgert\,\orcidlink{0009-0004-4554-8499},%
  ~and
  A.~Schwung\,\orcidlink{0000-0001-8405-0977}%
  \thanks{Manuscript received April~28,~2025; revised~—; accepted~—.
           Corresponding author:~E.~Hilgert (e\,mail: hilgert.er@gmail.com).}%
  \thanks{E.~Hilgert is with the Department of Technology and Systems, Graduate School for Applied Research in North Rhine‑Westphalia, Bochum, Germany.}%
  \thanks{A.~Schwung is with the Department of Automation Technology and Learning Systems, South Westphalia University of Applied Sciences, Soest, Germany.}%
  \thanks{This work has been submitted to the IEEE for possible publication. Copyright may be transferred without notice, after which this version may no longer be accessible.}%
}

\maketitle
\thispagestyle{empty}
\pagestyle{empty}

\begin{abstract}
This article addresses certification of closed‑loop stability when a virtual‑sensor based on a gated recurrent neural network operates in the feedback path of a nonlinear control system. The Hadamard gating used in standard GRU/LSTM cells is shown to violate the Luré–Postnikov Lyapunov conditions of absolute‑stability theory, leading to conservative analysis. To overcome this limitation, a modified architecture—termed the Luré–Postnikov gated recurrent neural network (LP‑GRNN)—is proposed; its affine update law is compatible with the Luré–Postnikov framework while matching the prediction accuracy of vanilla GRU/LSTM models on the NASA CMAPSS benchmark. Embedding the LP‑GRNN, the plant, and a saturated PI controller in a unified standard nonlinear operator form (SNOF) reduces the stability problem to a compact set of tractable linear matrix inequalities (LMIs) whose feasibility certifies global asymptotic stability. A linearized boiler case study illustrates the workflow and validates the closed‑loop performance, thereby bridging modern virtual‑sensor design with formal stability guarantees.
\end{abstract}

\begin{IEEEkeywords}
absolute stability, gated recurrent neural networks, linear matrix inequalities, Lyapunov methods, virtual sensors, standard nonlinear operator form
\end{IEEEkeywords}
% adjust columns once publication ID is inserted (optional, recommended)
\IEEEpubidadjcol

\section{Introduction}\label{sec:intro}
\IEEEPARstart{A}{ccurate} sensing is a main component of modern control systems, yet in many industrial scenarios physical measurement of critical variables is impractical or costly. Complex installation requirements, harsh environments, or simply the cost of high-quality instrumentation might leave important states unobserved. This challenge has led to the development of virtual sensors (or soft sensors), which estimate unmeasured quantities indirectly from other available signals \cite{Jiang2021ARO}. For clarity throughout the paper, we distinguish between a \emph{virtual sensor} and a \emph{state observer}: while state observers typically reconstruct internal system states, our work focuses on virtual sensors that estimate plant outputs $y_k$. Various approaches exist for developing virtual sensor models \cite{Jiang2021ARO}. Examples include first principle models \cite{Ruderman2016OnSA}, observer-based approaches \cite{Yin2012DatadrivenAO}, filtering techniques \cite{Canale2008ASO}, multivariate statistical analysis methods \cite{Godoy2011MultivariateSM} and machine learning-based approaches - the scope of this paper. The advances in machine learning (ML) have enhanced the capabilities of such virtual sensors, which led to a widespread usage of ML-based virtual sensors across domains, e.g.\cite{atkin,en15155743,Pattnaik2021MachineLB,Kim2021PredictionOE,SOK2024123224,icsbitirici2023lstm}.   In particular, recurrent neural networks (RNNs) like Gated Recurrent Units (GRUs) and Long Short-Term Memory (LSTM) networks are increasingly used for these tasks, thanks to their ability to capture temporal dependencies in time series data.

The development of ML-based virtual sensors comes with usecases within control loops, which introduces new challenges in terms of reliability and stability of such systems. Their accuracy may drop outside the training data distribution and they may exhibit dynamic behavior. Therefore, a controller could be misled, potentially destabilizing the closed-loop system \cite{GonzlezHerbn2024AssessmentAD}. To address these concerns, recent research has begun to combine control-theoretic analysis with machine learning design. Broadly, two complementary approaches have emerged: empirical validation - such as \cite{Canale2008ASO} and \cite{Sieberg2022EnsuringTR} - and formal stability analysis - e.g. \cite{Ruderman2016OnSA}, \cite{Zhirabok2022VirtualSD} and \cite{10394670}. ML models are not only implemented as virtual sensors, but as controllers, plant models or in combination. Thus, the effects of machine learning models in controlled systems is highly relevant and well-researched, see \cite{Moe2018MachineLI}. A variety of formal stability proofs do exist for machine learning models in controlled systems, as outlined in Section \ref{sec:work}. Despite this progress, significant gaps remain. 

To the best of our knowledge, existing stability analysis in control systems of gated recurrent neural networks (RNNs) is currently limited by simplified RNN variants or overly restrictive assumptions. While there exist nonconservative stability methods, the combination of those with nonlinear gating mechanisms in standard GRU and LSTM architectures pose significant challenges.  Our research bridges this gap by making a gated recurrent architecture compatible to a nonconservative Luré-Postnikov based stability method.

Our main contributions can be summarized as follows:
\begin{itemize}
    \item We analyze the stability of nonlinear control systems - consisting of a nonlinear plant / controller combination and a feedback provided by a virtual sensor based on gated recurrent neural network. We base the analysis on Luré Type Systems and the standard operator normal form (SNOF).
    \item We show that the Hadamard gating mechanism in gated recurrent architectures like GRU and LSTM are not suitable for transformations into the SNOF.
    \item We therefore propose an adjusted GRU architecture which allows for a SNOF transformation and stability analysis and show empirically, that the novel architecture is on par to vanilla GRU and LSTM architectures on a benchmark data set.
    \item We provide a formal proof of stability for the control system including the proposed virtual sensor architecture and show the applicability of the approach on a numerical example.
\end{itemize}

The remainder of this article is organized as follows.
Section \ref{sec:work} reviews existing techniques for guaranteeing stability when machine-learning models appear in feedback loops and show the limitations for gated RNNs.
Section \ref{sec:pre} recaps absolute-stability theory and the Standard Nonlinear Operator Form (SNOF) - the foundation of our analysis - and Section \ref{sec:hadamard} shows how absolute stability theory and Hadamard gating in vanilla GRU and LSTM cells are not compatible. We overcome this issue by introducing the LP-GRNN in Section \ref{sec:LP-GRNN}. The LP-GRNN is a modified gated-recurrent architecture that works with absolute stability theory.
Section \ref{sec:stability} develops a holistic transformation of the plant-controller–sensor loop into a single SNOF and derives LMI conditions for global asymptotic stability. In Section \ref{sec:example} we validate the LP-GRNN architecture with a benchmark comparison for a timeseries prediction task and show the stability analysis method on a boiler-plant case study. Finally, Section \ref{sec:conclusion} summarizes the contributions and outlines potential future work.

\section{Related Work}\label{sec:work}

We discuss related work on analyzing the stability of machine learning models in control systems’ feedback loops, with various approaches addressing this challenge.

\subsection{State-Observer Methods} Observer-based methods integrate neural networks into state estimation to recover unmeasured dynamics. These methods usually modify the network significantly to guarantee exponential error convergence using frameworks like Luré-Postnikov and Input-to-State Stability. For example, \cite{10098555} proposed a feedforward residual neural network observer with fal function activations, but their method is limited by conservative LMI conditions and the loss of nonlinear activation details due to global sector bounds. Meanwhile, \cite{10376425} introduced an adaptive LSTM-based observer for nonlinear systems, employing online weight tuning driven by Lyapunov stability. Although their Lyapunov analysis ensures error convergence and overall stability, the weight adaptation law—relying on a user-selected gain matrix and high-dimensional Jacobian calculations—may be highly problem-sensitive, computationally intensive, and prone to numerical issues, challenging its practical online adaptability.

\subsection{Input-to-State Stability (ISS, \texorpdfstring{$\delta$ISS}{deltaISS})} Input-to-State Stability (ISS) ensures that if external disturbances are bounded, the system’s states remain bounded. Incremental stability goes further by guaranteeing that trajectories starting from different initial conditions converge over time. An example for this stability approach is \cite{BONASSI2021105049}. The authors derive sufficient conditions for both ISS and incremental ISS ($\delta$ISS) of GRU networks. Practically, the results allow one to check if a trained GRU satisfies stability, or to impose these conditions during training to directly learn a GRU that is ISS and $\delta$ISS. These conditions are conservative due to the use of worst-case norm bounds and global Lipschitz constraints. Another interesting approach is presented by \cite{10298638}, where a $\delta$ISS condition for specific RNNs is proposed. They use the bounded norms of the weights in combination with the Lipschitz constant to ensure contraction. However, this limits the applicability, as the approach only works for specific contracting implicit RNNs, excluding relevant architectures like GRU and LSTM. There also exists research on LSTM architectures, notably \cite{pmlr-v120-bonassi20a}, where stability verification is achieved by conservatively constraining the weights and biases. A different method uses IQCs for stability analysis, as in \cite{9683530}. The benefit of this approach is its reduced conservatism; however, it is limited to vanilla RNNs without gating mechanisms and with fixed linear state updates.

\subsection{Luré-Postnikov Framework and the Standard Nonlinear Operator Form (SNOF)} The Luré-Postnikov framework models a system as an LTI block in feedback with a static nonlinearity. It constructs a Lyapunov function that captures both the energy of the linear part and the bounded nonlinearity (via sector conditions), leading to LMI conditions that guarantee global stability. Building on this idea, \cite{kim2009robust} utilized slope restriction bounds to formulate a less conservative stability theorem compared to classical criteria like the Circle or Tsypkin criterion, as well as those by \cite{Haddad1994AbsoluteSC, Kapila1994AME, Park2002StabilityCO}. Based on the theorem of \cite{kim2009robust} a Standard Nonlinear Operator Form (SNOF) was introduced by \cite{Kim2018StandardRA}. In SNOF, basic neural network architectures are reformulated such that the system is expressed as a linear dynamic operator in feedback with a static, memoryless nonlinearity. This transformation enables the application of Luré-type stability analysis to neural networks by ensuring that the nonlinearities meet specific sector and slope-restriction conditions. The SNOF transformation has been successfully applied to several architectures, including the Neural State Space Model (NSSM), Dynamic Recurrent Neural Networks (DRNN), and the Global Input-Output Model (GIOM). In addition, \cite{Nguyen2021RobustCT} extended these ideas to ResNets. However, these transformations rely on the assumption that the network’s operations are composed of an affine linear operator followed by a static nonlinearity. Standard gated recurrent architectures like GRU and LSTM employ multiplicative (Hadamard) gating mechanisms. These gates—essential for controlling information flow and maintaining long-term dependencies—introduce element-wise multiplications between the candidate state and gating signals. Such multiplicative interactions break the affine structure required for a direct SNOF transformation and is not compatible with the Lyapunov function stability approach, which will be further explained in Section \ref{sec:hadamard}. 

\subsection{Alternative Gating Mechanism}
A number of alternative gating strategies have been explored, including variants that merge or simplify gating. For example, the Minimal Gated Unit (MGU) proposed by \cite{zhou2016minimal} reduces complexity by merging the input and forget mechanisms into a single gate, thereby lessening the number of multiplicative interactions. Similarly, several studies have investigated coupling input and forget gates in LSTM architectures to reduce parameter count and simplify dynamics—see, e.g., \cite{7508408} and \cite{jozefowicz2015empirical}. There are also innovative gate adaptations like those in \cite{cheng2020refined} and \cite{genet2025siggate}. Nevertheless, even these simplified or alternative gating mechanisms generally remain non-compatible with the Luré framework. They still involve some form of multiplicative nonlinearity that does not readily decompose into a purely affine linear operator and a static nonlinearity, an issue we elaborate in Section \ref{sec:hadamard}.

\subsection{Summarized Research Gap} 
 While there are interesting approaches addressing the stability of machine-learning models in control systems, to our knowledge there are no nonconservative stability proofs for gated recurrent networks that do not restrict either the architecture or its nonlinearities. Existing works either transform feed-forward or simpler recurrent architectures into SNOF or impose conservative conditions on GRU/LSTM weights and biases. Our work builds upon the approach of \cite{Kim2018StandardRA} by reformulating the GRU’s gating mechanism, thereby making it transformable into the SNOF. This novel adjustment allows for stability analysis—either for the neural network in isolation or when integrated into a control loop with plant dynamics as a virtual sensor.

\section{Preliminaries}\label{sec:pre}
This section introduces the concepts and notation used throughout the paper. Specifically, we concentrate on absolute stability theory, standard nonlinear operator forms and gated recurrent neural networks. 

\subsection{Absolute Stability Theory}
Our stability analysis is based on the absolute stability theory of Lur\'e-Postnikov, which addresses the stability of systems structured as a linear time-invariant (LTI) block in feedback with a nonlinear element. A typical discrete-time Lur\'e system is given by
\begin{equation}\label{eq:lure_system}
\begin{aligned}
x_{k+1} &= A\,x_k + B\,\phi(q_k),\\
q_k     &= C\,x_k ,
\end{aligned}
\end{equation}

\noindent
where $x_k\in\mathbb R^{n}$ is the state, 
$q_k\in\mathbb R^{p}$ is the nonlinear intermediate vector, 
$A\in\mathbb R^{n\times n}$, $B\in\mathbb R^{n\times p}$ and 
$C\in\mathbb R^{p\times n}$ are constant matrices,  
and $\phi:\mathbb R^{p}\!\to\!\mathbb R^{p}$ is a component‑wise static non‑linearity.

This formulation is extended via a Linear Fractional Transformation (LFT) to a Standard Nonlinear Operator Form (SNOF) \cite{Kim2018StandardRA} \cite{Jeon2022CompactNN}, which allows for dynamic, multivariate nonlinearities to be modeled and is expressed as:

\begin{equation}\label{eq:SNOF}
\Sigma_k:\;
\left\{
\begin{aligned}
x_{k+1} &= A\,x_k + B_p\,p_k + B_u\,u_k,\\
q_k     &= C_q\,x_k + D_{qp}\,p_k + D_{qu}\,u_k,\\
y_k     &= C_y\,x_k + D_{yp}\,p_k + D_{yu}\,u_k,
\end{aligned}
\right.
\qquad
p_k = \Gamma(q_k).
\end{equation}

\noindent
Here $x_k\in\mathbb R^{n}$, $u_k\in\mathbb R^{m}$, 
$p_k,q_k\in\mathbb R^{h}$ and $y_k\in\mathbb R^{l}$.  
Accordingly $A\in\mathbb R^{n\times n}$, 
$B_p\in\mathbb R^{n\times h}$, 
$B_u\in\mathbb R^{n\times m}$, 
$C_q\in\mathbb R^{h\times n}$, 
$D_{qp}\in\mathbb R^{h\times h}$, etc.; 
$\Gamma(\cdot)$ acts diagonally on $q_k$. The vector $q_k$ has components $q_k = [q_{k,1}, q_{k,2}, \ldots, q_{k,h}]^T$, and the diagonal nonlinearity $\Gamma(\cdot)$ is defined component-wise as $\Gamma(q_k) = [\phi_1(q_{k,1}), \phi_2(q_{k,2}), \ldots, \phi_h(q_{k,h})]^T$, where each $\phi_i : \mathbb{R} \to \mathbb{R}$ is a scalar nonlinear function.

Without the system output equation the SNOF can be formulated by
$\left[\begin{smallmatrix}x_{k+1} \\ q_k\end{smallmatrix}\right] = M \left[\begin{smallmatrix}x_k  \\ p_k\end{smallmatrix}\right]$,
with $p_k = \Gamma(q_k)$. The SNOF is well-posed if it fulfills the following well-posedness criterion:

\begin{definition}[Well‑posed SNOF {\cite{Kim2018StandardRA}}]\label{def:SNOF}
Let $\Gamma:\mathbb R^{h}\!\to\!\mathbb R^{h}$ be the memoryless non‑linearity in a SNOF and assume $\Gamma(0)=0$ and $\Gamma$ is injective. Then there exists a (diagonal) time‑varying gain  
$\Delta_k\in\mathbb R^{h\times h}$, $k\in\mathbb Z_{\ge 0}$, such that $\Gamma(q_k)=\Delta_k\,q_k
\quad\text{for every bounded }q_k\in\mathbb R^{h}.$ The SNOF is said to be \emph{well‑posed} if $R \;=\; I - M_{22}\,\Delta_k$
is invertible for all admissible $\Delta_k$.
\end{definition}

A popular method for proving the stability of systems like \eqref{eq:lure_system} and, by extension, \eqref{eq:SNOF}, is to construct a Lyapunov function. A Lyapunov function is a scalar function 
\[
V:\mathbb{R}^n\to\mathbb{R},
\]
that is used to certify the stability of an equilibrium point. For a discrete-time system \(x_{k+1}=f(x_k)\), the function \(V(x)\) must satisfy:
\begin{enumerate}
  \item \textbf{Positive Definiteness:} \(V(0)=0\) and \(V(x)>0\) for all \(x\neq0\).
  \item \textbf{Descent Condition:} The difference 
  \[
  \Delta V(x_k) = V(x_{k+1})-V(x_k)
  \]
  is negative definite, i.e., \(\Delta V(x_k) < 0\) for all \(x \neq 0\).
\end{enumerate}
Note that while quadratic Lyapunov functions (e.g., \(V(x)=x^\top P x\) with \(P>0\)) are common, they tend to be conservative when applied to systems with nonlinearities. Thus, composite Lyapunov functions that incorporate additional information such as sector bounds are often used to derive less conservative stability conditions. The idea of sector bounds is based on the work of \cite{1098316} and defined as:

\begin{definition}[Sector‑bounded non‑linearity {\cite{Gupta1994SomePA}}]\label{def:sb}
A time‑varying, memoryless non‑linearity $\psi(\sigma,k)$ is said to lie in the sector $[a,b]$ if $\bigl(\psi(\sigma,k)-a\sigma\bigr)\bigl(\psi(\sigma,k)-b\sigma\bigr)\le 0$ for every $\sigma\in\mathbb R$ and $k\in\mathbb Z_{\ge 0}$.
\end{definition}

Now consider the following Lyapunov candidate function utilizing sector-bounded nonlinearities, see \cite{kim2009robust}:
\begin{equation}\label{eq:modified_lyapunov}
\begin{aligned}
V(x_k) \;&=\; \tilde{x}_k^{\top} P \,\tilde{x}_k 
\;+\; 2 \sum_{i=1}^{h} Q_{ii}\int_{0}^{\,q_{k,i}}\phi_{i}(\sigma)\,d\sigma 
\;\\&\quad+ 2 \sum_{i=1}^{h} \tilde{Q}_{ii} \int_{0}^{\,q_{k,i}} \bigl[\xi\,\sigma \;-\; \phi_{i}(\sigma)\bigr]\,d\sigma,\\
\tilde{x}_k \;&\triangleq\; 
\left[\begin{smallmatrix*}
x_k\\
p_k\\
q_k
\end{smallmatrix*}\right], P^{\top} \;=\; P \;\triangleq\;
\left[\begin{smallmatrix*}
P_{11} & P_{12} & P_{13}\\
P_{12}^{\top} & P_{22} & P_{23}\\
P_{13}^{\top} & P_{23}^{\top} & P_{33}
\end{smallmatrix*}\right]\;\geq\; 0,
\end{aligned}
\end{equation} 
with \(P_{11} > 0\), and the diagonal elements of \(Q\) and \(\tilde{Q}\) satisfy \(Q_{ii} \;\ge\; 0, \tilde{Q}_{ii} \;\ge\; 0, \forall\,i=1,\ldots,h.\)
In \eqref{eq:modified_lyapunov}, the first term \(\tilde{x}_k^{\top} P\,\tilde{x}_k\) captures the energy of the augmented state, while the integrals contains information from the nonlinearities \(\phi_i(\sigma)\) and the sector-bound parameter \(\xi\), reducing conservatism in the stability analysis.

\cite{kim2009robust} showed, that using slope restriction bounds on the nonlinearity leads to a nonconservative stability theorem ensuring the difference of the candidate Lyapunov function of \eqref{eq:modified_lyapunov} is always negative. Slope restriction is hereby defined as:

\begin{definition}[Slope‑restricted non‑linearity {\cite{Nguyen2021RobustCT}}]\label{def:sr}
A time‑varying, memoryless non‑linearity $\psi(\sigma,k)$ is \emph{slope‑restricted} in the sector $[a,b]$ if $a \,\le\,
\displaystyle(\psi(\sigma,k)-\psi(\hat\sigma,k))/(\sigma-\hat\sigma)
\,\le\, b$ for all $\sigma\neq\hat\sigma\in\mathbb R$ and all $k\in\mathbb Z_{\ge 0}$.
\end{definition}

\begin{remark}
The sector-bounded and slope-restricted assumptions are fundamental requirements inherited from classical absolute stability theory \cite{kim2009robust}. These conditions are necessary mathematical prerequisites that enable the construction of the composite Lyapunov function in \eqref{eq:modified_lyapunov}. In practice, many common activation functions naturally satisfy these conditions, and these assumptions represent the current state-of-the-art for obtaining global stability guarantees.
\end{remark}

The sufficient stability criterion for a SNOF with slope-restricted and sector-bounded nonlinearities is stated by \cite{kim2009robust} as:

\begin{theorem}[Stability of a SNOF {\cite{kim2009robust}}]\label{th:stabtheorem}
Consider a SNOF whose memoryless non‑linearity
$\Gamma$ is sector bounded in $[0,\xi]$ and slope restricted in $[0,\mu]$,
and is continuous almost everywhere.  
The closed loop is \emph{globally asymptotically stable} if there exist a symmetric matrix $P=P^\top\succeq 0$ with a positive‑definite block $P_{11}\succ 0$, and  diagonal, positive‑semidefinite matrices $Q,\;\tilde Q,\;T,\;\tilde T,\;N\in\mathbb R^{h\times h}$ such that the LMI
\[
G \;=\;
\begin{bmatrix}
G_{11} & G_{12} & G_{13}\\
G_{12}^\top & G_{22} & G_{23}\\
G_{13}^\top & G_{23}^\top & G_{33}
\end{bmatrix}
\;<\; 0
\]
is satisfied.  
The explicit expressions for the blocks \(G_{ij}\) are stated in \cite{kim2009robust}.
\end{theorem}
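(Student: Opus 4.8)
The plan is to take the classical absolute-stability route: show that the composite function \eqref{eq:modified_lyapunov}, evaluated along the unforced loop (set $u_k\equiv 0$, since the claim concerns asymptotic stability of the origin), is a legitimate Lyapunov function, and that the LMI $G<0$ is precisely the algebraic certificate that its one-step difference is negative definite. First I would invoke Definition~\ref{def:SNOF}: well-posedness ($R=I-M_{22}\Delta_k$ invertible for all admissible $\Delta_k$) guarantees that the algebraic loop $q_k=C_q x_k+D_{qp}p_k$, $p_k=\Gamma(q_k)$ has a unique bounded solution, so $p_k,q_k$ are well-defined functions of $x_k$ and the augmented state $\tilde x_k=[x_k^\top,p_k^\top,q_k^\top]^\top$ evolves deterministically; this makes $V$ well-defined along trajectories. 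Next, using $\Gamma(0)=0$ and the sector bound $\Gamma\in[0,\xi]$, each integrand $\phi_i(\sigma)$ and $\xi\sigma-\phi_i(\sigma)$ has the same sign as $\sigma$, so both families of integrals in \eqref{eq:modified_lyapunov} are nonnegative; with $Q_{ii},\tilde Q_{ii}\ge 0$, $P\succeq 0$ and $P_{11}\succ 0$ this yields $V(x_k)>0$ for $\tilde x_k\neq 0$ and $V(x_k)\to\infty$ as $\|x_k\|\to\infty$ (radial unboundedness).

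The core computation is the forward difference $\Delta V(x_k)=V(x_{k+1})-V(x_k)$. The quadratic term contributes $\tilde x_{k+1}^\top P\tilde x_{k+1}-\tilde x_k^\top P\tilde x_k$; substituting $x_{k+1}=Ax_k+B_pp_k$, $q_{k+1}=C_q x_{k+1}+D_{qp}p_{k+1}$ and $p_{k+1}=\Gamma(q_{k+1})$ turns this into a quadratic form in the extended vector $\zeta_k=[x_k^\top,p_k^\top,q_k^\top,p_{k+1}^\top,q_{k+1}^\top]^\top$. For the integral terms I would write each difference as $\int_0^{q_{k+1,i}}\!-\int_0^{q_{k,i}}=\int_{q_{k,i}}^{q_{k+1,i}}\phi_i(\sigma)\,d\sigma$ and sandwich this increment between the Jensen-type bounds $\phi_i(q_{k,i})\,\delta q_i\le\int_{q_{k,i}}^{q_{k+1,i}}\phi_i(\sigma)\,d\sigma\le\phi_i(q_{k+1,i})\,\delta q_i$, where $\delta q_i:=q_{k+1,i}-q_{k,i}$; these bounds hold in both increment directions precisely because slope restriction in $[0,\mu]$ makes each $\phi_i$ nondecreasing. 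After this step $\Delta V(x_k)$ is bounded above by a quadratic form in $\zeta_k$ in which the integrals themselves no longer appear, only the gate variables $p,q$ at the two consecutive instants.

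To close the argument I would append nonnegative linear combinations of the available constraint inequalities (the S-procedure): the sector condition $p_{k,i}(p_{k,i}-\xi q_{k,i})\le 0$ weighted by $T_{ii}\ge 0$, its time-shifted copy at $k+1$, the incremental slope condition $(\delta p_i)(\delta p_i-\mu\,\delta q_i)\le 0$ with $\delta p_i:=p_{k+1,i}-p_{k,i}$ weighted by $\tilde T_{ii}\ge 0$, and the mixed cross-multiplier $N$ coupling current and next-step gate variables exactly as in \cite{kim2009robust}. Since each added term is nonpositive, the resulting expression still upper-bounds $\Delta V(x_k)$. Finally, using the algebraic relations $q_k=C_qx_k+D_{qp}p_k$ and $q_{k+1}=C_q(Ax_k+B_pp_k)+D_{qp}p_{k+1}$ to eliminate $q_k,q_{k+1},p_{k+1}$ — here the invertibility of $R$ from Definition~\ref{def:SNOF} is what legitimizes the substitution — collapses the quadratic form onto $\tilde x_k$ and reproduces exactly the $3\times3$ block matrix $G$ with blocks $G_{ij}$ as given in \cite{kim2009robust}. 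Then $G<0$ implies $\Delta V(x_k)\le\tilde x_k^\top G\,\tilde x_k<0$ for $\tilde x_k\neq 0$, and combined with positive definiteness and radial unboundedness the standard discrete-time Lyapunov theorem yields global asymptotic stability.

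The main obstacle is the middle of the argument: organizing the integral increments so that only the increment variables $\delta p_i,\delta q_i$ survive, and then eliminating the next-step variables through the possibly ill-conditioned algebraic loop without loosening the bound beyond the sector and slope information already exploited — this is exactly what dictates the intricate form of the blocks $G_{ij}$. Reproducing those entries is purely mechanical once the multiplier pattern $\{Q,\tilde Q,T,\tilde T,N\}$ is fixed, so I would quote them from \cite{kim2009robust} rather than re-derive each term; the conceptual content sits entirely in the choice of the composite Lyapunov function \eqref{eq:modified_lyapunov} and the matching S-procedure terms.
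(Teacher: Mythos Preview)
The paper does not prove this theorem; it is quoted as a result of \cite{kim2009robust}, with even the block expressions $G_{ij}$ deferred to that reference, so there is no proof in the present article against which to compare your proposal. That said, your sketch---composite Lyapunov candidate \eqref{eq:modified_lyapunov}, sector-based positivity and radial unboundedness, Jensen-type bounding of the integral increments via the monotonicity implied by slope restriction, S-procedure with the multipliers $T,\tilde T,N$, and elimination of the next-step variables through the well-posed algebraic loop---is precisely the standard Luré--Postnikov derivation that \cite{kim2009robust} carries out, so your plan is sound and aligns with the original source even though this paper omits the argument entirely.
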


The conditions stated in Theorem \ref{th:stabtheorem} correspond to a \emph{Linear Matrix Inequality (LMI)} formulation. Therefore, the stability conditions can be verified numerically by solving the LMI optimization problem. For solving one can utilize standard convex optimization libraries designed for LMIs. If the LMI solver successfully provides matrices $P,Q,\tilde{Q},T,\tilde{T},N$, the closed-loop system fulfills the conditions for global asymptotic stability.

\subsection{Neural Network Architectures}
Recurrent neural networks (RNNs) are widely used for modeling temporal dependencies in sequential data. In particular, \emph{gated} architectures, such as the Long Short-Term Memory (LSTM) network \cite{LSTM} and the Gated Recurrent Unit (GRU) \cite{GRU}, introduce internal gating mechanisms that mitigate the vanishing or exploding gradient problem. These gates dynamically regulate how information flows through the hidden states, allowing the network to capture long-range dependencies more effectively, which make them especially viable for virtual sensor usecases. 
We focus on the GRU due to its simpler architecture, which reduces analytical complexity while still using the essential gating dynamics. The analysis for LSTM architectures can be performed analogously, but involves additional gates and parameters, making the derivations more extensive. A vanilla GRU architecture is expressed as:

\begin{equation}\label{eq:GRU_original}
\begin{aligned}
r_k        &= \sigma\!\bigl(W_{r,x}x_k + b_{r,x} + W_{r,h}h_{k-1} + b_{r,h}\bigr),\\
z_k        &= \sigma\!\bigl(W_{z,x}x_k + b_{z,x} + W_{z,h}h_{k-1} + b_{z,h}\bigr),\\
\tilde h_k &= \tanh\!\bigl(W_{\tilde h,x}x_k + b_{\tilde h,x}
             + W_{\tilde h,rh}(r_k\odot h_{k-1}) + b_{\tilde h,rh}\bigr),\\
h_k        &= (1-z_k)\odot h_{k-1} + z_k\odot\tilde h_k .
\end{aligned}
\end{equation}

\noindent
$x_k\in\mathbb R^{n_x}$ is the input, 
$h_{k-1},h_k,r_k,z_k,\tilde h_k\in\mathbb R^{n_h}$ are hidden‑layer vectors.  
$W_{\!*}$ denotes weight matrices ($n_h\times n_x$ or $n_h\times n_h$), 
$b_{\!*}$ the bias vectors, 
$\sigma(\cdot)$ the logistic sigmoid, 
$\tanh(\cdot)$ the hyperbolic tangent, 
and $\odot$ the Hadamard (element‑wise) product, where $(A \odot B)_{ij} = A_{ij} \cdot B_{ij}$.

\subsection{Problem Statement}

The problem addressed in this paper is based of the setup discussed by \cite{10394670}. The challenge is to derive formal proofs of stability for a controlled discrete Multi-Input-Multi-Output (MIMO) plant, for which a machine learning-based virtual sensor estimates one or more of its output signals. The virtual sensor is in our case based on a gated recurrent network, e.g. a GRU. This setup is visualized in Fig. \ref{fig:problem}. In this figure, the discrete plant is denoted as $\Sigma^{(p)}_{k,\Delta}$, the machine‑learning–based virtual sensor by $\Sigma^{(n)}_{k}$, and the discrete controller by $\Sigma^{(c)}_{k}$.  The reference signal is $r_k$, the tracking error is $e_k$, and the plant input is $u^{(p)}_{k}$.  The overall plant output is $y^{(p)}_{k}$, which we partition into the measured part $y^{(p)}_{k,\Delta}$ and the virtual‑sensor estimate $y^{(n)}_{k}$, i.e.\ $y^{(p)}_{k}=\{\,y^{(p)}_{k,\Delta},\,y^{(n)}_{k}\}$.

\begin{figure*}[htb]
  \centering
  \includegraphics[width=2\columnwidth]{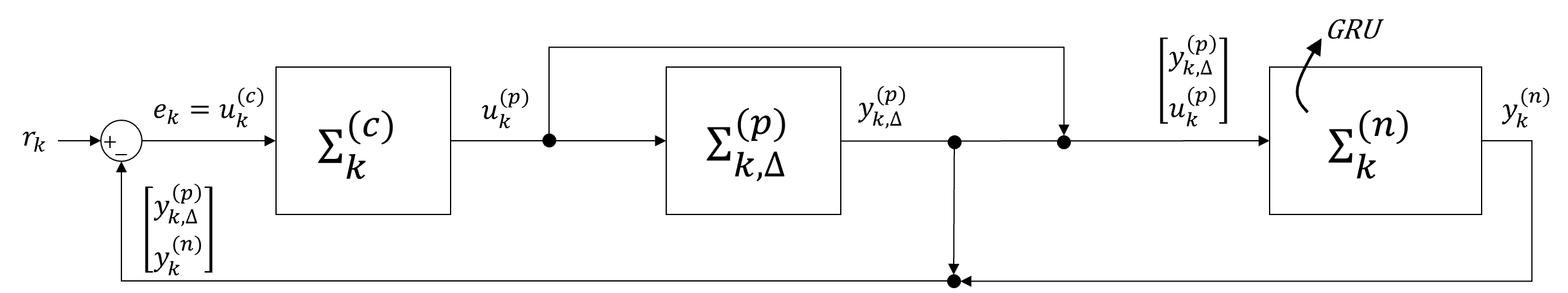}
  \caption{Illustration of the general problem of a discrete-time plant controlled by a discrete-time controller and a machine learning-based virtual sensor estimating a plant output signal \cite{10394670}.}
  \label{fig:problem}
\end{figure*}

Before we continue analyzing stability for such a system, we need to dig deeper into Hadamard gated recurrent neural networks and their incompatibility with the absolute stability theory and therefore the SNOF.

\section{Incompatibility of Hadamard Gated Networks with Absolute Stability Theory}\label{sec:hadamard}
Our goal is to adapt the stability theorem \ref{th:stabtheorem} for the problem in Fig. \ref{fig:problem}. The first step is to convert the Gated Recurrent Unit (GRU) into the underlying SNOF-like structure, expressed as:

\begin{equation}\label{eq:NNSNOF}
\begin{aligned}
\Sigma_k^{(n)}&:\;
\left\{
\begin{aligned}
x^{(n)}_{k+1} &= A^{(n)}x^{(n)}_k
                + B^{(n)}_p p^{(n)}_k
                + B^{(n)}_u u^{(n)}_k
                + \beta_x^{(n)},\\
q^{(n)}_k     &= C^{(n)}_q x^{(n)}_k
                + D^{(n)}_{qp} p^{(n)}_k
                + D^{(n)}_{qu} u^{(n)}_k
                + \beta_q^{(n)},\\
y^{(n)}_k     &= C^{(n)}_y x^{(n)}_k
                + D^{(n)}_{yp} p^{(n)}_k
                + D^{(n)}_{yu} u^{(n)}_k
                + \beta_o^{(n)},
\end{aligned}\right. \\
p^{(n)}_k &= \Gamma^{(n)}(q^{(n)}_k).
\end{aligned}
\end{equation}

\noindent
$x^{(n)}_k\in\mathbb R^{n}$ is the state, 
$u^{(n)}_k\in\mathbb R^{m}$ the input, 
$q^{(n)}_k,p^{(n)}_k\in\mathbb R^{h}$ input and output vectors of the nonlinear operator, 
and $y^{(n)}_k\in\mathbb R^{l}$ the network output.  
$\beta_{\!*}$ are bias vectors; 
all other matrices are sized conformably.

In the following we propose a SNOF-like formulation of the GRU and then outline the derivation steps that lead to it.

\begin{proposition}[GRU in SNOF‑like form]\label{pr:GRU2SNOF}
The standard GRU from \eqref{eq:GRU_original} admits the state–space representation

\begin{equation}\label{eq:GRU-SNOF-transformed}
\begin{aligned}
\begin{bmatrix}
h_{k+1}\\
q^{(h)}_{k}\\
q^{(\tilde h)}_{k}\\
q^{(rh)}_{k}\\
q^{(z)}_{k}\\
q^{(r)}_{k}
\end{bmatrix}
&=\left[
\begin{smallmatrix}
 I &  I & 0 & 0 & 0 & 0\\
-I &  0 & I & 0 & 0 & 0\\
 0 &  0 & 0 & W_{\tilde h,rh} & W_{\tilde h,x} & b_{\tilde h,x}+b_{\tilde h,rh}\\
 I &  0 & 0 & 0 & 0 & 0\\
W_{z,h} & 0 & 0 & 0 & W_{z,x} & b_{z,x}+b_{z,h}\\
W_{r,h} & 0 & 0 & 0 & W_{r,x} & b_{r,x}+b_{r,h}
\end{smallmatrix}\right]
\begin{bmatrix}
h_{k}\\
p^{(h)}_{k}\\
p^{(\tilde h)}_{k}\\
p^{(rh)}_{k}\\
x_{k}\\
1
\end{bmatrix},\\[4pt]
p^{(h)}_{k}       &= \sigma\!\bigl(q^{(z)}_{k}\bigr)\,\odot\,q^{(h)}_{k},\\
p^{(\tilde h)}_{k}&= \tanh\!\bigl(q^{(\tilde h)}_{k}\bigr),\\
p^{(rh)}_{k}      &= \sigma\!\bigl(q^{(r)}_{k}\bigr)\,\odot\,q^{(rh)}_{k}.
\end{aligned}
\end{equation}
\end{proposition}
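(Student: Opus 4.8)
The plan is to prove Proposition~\ref{pr:GRU2SNOF} by an explicit change of variables rather than by any existence or fixed-point argument: introduce one auxiliary ``nonlinearity-input'' coordinate $q^{(\cdot)}_k$ for every scalar argument that enters a nonlinearity in \eqref{eq:GRU_original} -- crucially including \emph{both} factors of each Hadamard product -- and then read the claimed $6\times 6$ matrix off row by row. First I would shift the time index in \eqref{eq:GRU_original} by one so that $h_{k+1}$ is produced from $h_k$ and $x_k$, matching the left-hand side of \eqref{eq:GRU-SNOF-transformed}, and rewrite the convex-combination update via the identity
\[
(1-z_k)\odot h_k + z_k\odot\tilde h_k \;=\; h_k + z_k\odot(\tilde h_k - h_k),
\]
which exposes an additive ``integrator-like'' form $h_{k+1}=h_k+p^{(h)}_k$ with $p^{(h)}_k := z_k\odot(\tilde h_k-h_k)$.

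Next I would name the pre-activations. Put $q^{(r)}_k := W_{r,h}h_k + W_{r,x}x_k + (b_{r,x}+b_{r,h})$ and $q^{(z)}_k := W_{z,h}h_k + W_{z,x}x_k + (b_{z,x}+b_{z,h})$, so that $r_k=\sigma(q^{(r)}_k)$, $z_k=\sigma(q^{(z)}_k)$; note that the two per-gate bias vectors collapse into a single additive constant, which is exactly why the last column of the matrix carries $b_{r,x}+b_{r,h}$ and $b_{z,x}+b_{z,h}$. To express $r_k\odot h_k$ within the template, introduce the pass-through coordinate $q^{(rh)}_k := h_k$ and set $p^{(rh)}_k := \sigma(q^{(r)}_k)\odot q^{(rh)}_k = r_k\odot h_k$. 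Then the $\tanh$ argument is affine in the new variables: $q^{(\tilde h)}_k := W_{\tilde h,rh}\,p^{(rh)}_k + W_{\tilde h,x}x_k + (b_{\tilde h,x}+b_{\tilde h,rh})$ with $p^{(\tilde h)}_k := \tanh(q^{(\tilde h)}_k)=\tilde h_k$. Finally, to realise $z_k\odot(\tilde h_k-h_k)$, introduce $q^{(h)}_k := p^{(\tilde h)}_k - h_k = \tilde h_k - h_k$ and set $p^{(h)}_k := \sigma(q^{(z)}_k)\odot q^{(h)}_k$. Collecting $h_{k+1}=h_k+p^{(h)}_k$ together with the five defining relations for $q^{(h)}_k,q^{(\tilde h)}_k,q^{(rh)}_k,q^{(z)}_k,q^{(r)}_k$, and stacking the vector $[\,h_k,\,p^{(h)}_k,\,p^{(\tilde h)}_k,\,p^{(rh)}_k,\,x_k,\,1\,]^\top$, reproduces precisely the block matrix in \eqref{eq:GRU-SNOF-transformed}; verifying the six rows is then one line per row, and the three nonlinearity equations are exactly the stated ones.

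I do not expect a genuine obstacle in the derivation itself: the only ``creative'' step is the device of introducing the dummy coordinates $q^{(rh)}_k$ and $q^{(h)}_k$ so that each Hadamard product can be written in the form (scalar nonlinearity of one coordinate) $\odot$ (another coordinate). The substantive point -- the one Section~\ref{sec:hadamard} will exploit -- is to record what these two surviving relations are: $p^{(h)}_k=\sigma(q^{(z)}_k)\odot q^{(h)}_k$ and $p^{(rh)}_k=\sigma(q^{(r)}_k)\odot q^{(rh)}_k$ are \emph{bilinear} maps in which each output component depends on two distinct entries of the stacked $q$-vector. Hence the operator produced here is not a diagonal, component-wise map $q\mapsto[\phi_1(q_1),\dots,\phi_h(q_h)]^\top$ of the kind required by the SNOF in \eqref{eq:SNOF} and by the well-posedness notion of Definition~\ref{def:SNOF}; the representation is only ``SNOF-like'', which is the precise sense in which Hadamard gating resists a SNOF transformation and motivates the modified architecture of Section~\ref{sec:LP-GRNN}.
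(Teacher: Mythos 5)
Your proposal is correct and follows essentially the same route as the paper's proof: rewriting the convex combination as $h_{k+1}=h_k+z_k\odot(\tilde h_k-h_k)$, introducing the auxiliary coordinates $q^{(h)}_k$, $q^{(rh)}_k$ and the gate pre-activations $q^{(z)}_k$, $q^{(r)}_k$, and reading the block matrix off row by row. Your closing observation that the surviving relations $p^{(h)}_k=\sigma(q^{(z)}_k)\odot q^{(h)}_k$ and $p^{(rh)}_k=\sigma(q^{(r)}_k)\odot q^{(rh)}_k$ are only ``SNOF-like'' rather than diagonal is exactly the point the paper develops immediately after the proposition.
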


\begin{proof}
Starting from the standard GRU formulation \eqref{eq:GRU_original}, we note that the state update combines linear transformations of the previous state and inputs with nonlinear activations. For clarity and without loss of generality (assuming a linear output layer), we decompose the hidden state update as
\begin{equation}
\begin{aligned}
    h_k &= h_{k-1} + z_k \odot (\tilde{h}_k-h_{k-1}), \\
    h_k &= h_{k-1} + p^{(h)}_k, \\
    p^{(h)}_k &= z_k \odot q^{(h)}_k.
\end{aligned}
\end{equation}
This expression is then recast into a SNOF-like notation:
\begin{equation}\label{eq:GRU_SNOF_1}
\begin{bmatrix}
h_{k+1}\\
q^{(h)}_{k}
\end{bmatrix}
=
\begin{bmatrix}
 I & I & 0\\
- I & 0 & I
\end{bmatrix}
\begin{bmatrix}
h_{k}\\
p^{(h)}_{k}\\
\tilde h_{k}
\end{bmatrix}.
\end{equation}

\noindent
The static relation is $p^{(h)}_{k}=z_{k}\,\odot\,q^{(h)}_{k}$. The candidate hidden state is defined by
\begin{equation}\label{eq:GRU_SNOF_2}
\begin{aligned}
 \tilde{h}_k &= p^{(\tilde{h})}_k = \Gamma(q_k^{(\tilde{h})}), \\
q_k^{(\tilde{h})} &= W_{\tilde{h},x}\, x_k + b_{\tilde{h},x} + W_{\tilde{h},rh}\,(r_k \odot h_{k-1}) + b_{\tilde{h},rh}, \\
&= W_{\tilde{h},x}\, x_k + b_{\tilde{h},x} + W_{\tilde{h},rh}\, p^{(rh)}_k + b_{\tilde{h},rh}, \\
p^{(rh)}_k &= r_k \odot q^{(rh)}_k.
\end{aligned}
\end{equation}

By merging \eqref{eq:GRU_SNOF_1} and \eqref{eq:GRU_SNOF_2}, we obtain an intermediary formulation that contains the state and the candidate hidden state:
\begin{equation}\label{eq:gru-snof-merge}
\begin{bmatrix}
h_{k+1}\\
q^{(h)}_{k}\\
q^{(\tilde h)}_{k}\\
q^{(rh)}_{k}
\end{bmatrix}
=\left[
\begin{smallmatrix}
 I &  I & 0 & 0 & 0 & 0\\
- I &  0 & I & 0 & 0 & 0\\
 0 &  0 & 0 & W_{\tilde h,rh} & W_{\tilde h,x} & b_{\tilde h,x}+b_{\tilde h,rh}\\
 I &  0 & 0 & 0 & 0 & 0
\end{smallmatrix}\right]
\begin{bmatrix}
h_{k}\\
p^{(h)}_{k}\\
p^{(\tilde h)}_{k}\\
p^{(rh)}_{k}\\
x_{k}\\
1
\end{bmatrix}.
\end{equation}

The next step is to include the gating signals \(z_k\) and \(r_k\). These are implemented as
\begin{equation}
\begin{aligned}
z_k &= p^{(z)}_k = \Gamma(q_k^{(z)}),\\q_k^{(z)} &= W_{z,x}\, x_k + b_{z,x} + W_{z,h}\, h_{k-1} + b_{z,h},\\
r_k &= p^{(r)}_k = \Gamma(q_k^{(r)}),\\q_k^{(r)} &= W_{r,x}\, x_k + b_{r,x} + W_{r,h}\, h_{k-1} + b_{r,h}.
\end{aligned}
\end{equation}
Substituting these relations back into the merged equations results in the final SNOF-like structure of \eqref{eq:GRU-SNOF-transformed}.

\end{proof}

While the overall structure can be cast in a SNOF-like form, the multivariate nonlinearities inherent in the GRU's gating mechanism conflict with the formulation of the scalar Lyapunov candidate function which underlies the theorem \ref{th:stabtheorem}.

\begin{proposition}[Hadamard Incompatibility with Lur\'e--Postnikov]
Let \(f:\mathbb{R}^n \to \mathbb{R}^n\) be defined by
\[
   f(x) \;=\; g(x)\;\odot\; h(x),
\]
where \(g\) and \(h\) are scalar functions applied componentwise to \(x\), each assumed to be sector-bounded and slope-restricted. In general, the mapping \(f\) is \emph{nonconservative}, so there does not exist a scalar storage (or energy) function \(U:\mathbb{R}^n\to\mathbb{R}\) with \(\nabla U=f\).
\end{proposition}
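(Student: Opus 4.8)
The plan is to settle the existence of a scalar potential by the classical integrability (Clairaut/Poincaré) test: on the simply connected domain $\mathbb{R}^{n}$ a $C^{1}$ field $f$ equals $\nabla U$ for some scalar $U$ only if its Jacobian is symmetric, i.e. $\partial f_i/\partial x_j=\partial f_j/\partial x_i$ for every pair $(i,j)$. Certifying nonconservativity then reduces to exhibiting a single pair of coordinates on which the mixed-partial defect $\partial_j f_i-\partial_i f_j$ fails to vanish. I would moreover keep in view the stronger requirement actually imposed by the composite Lyapunov candidate \eqref{eq:modified_lyapunov}: its storage term $\sum_i\int_0^{q_{k,i}}\phi_i(\sigma)\,d\sigma$ is \emph{separable}, so its gradient is the \emph{diagonal} field $q\mapsto(\phi_i(q_i))_i$; hence compatibility with \eqref{eq:modified_lyapunov} demands that each output of $f$ depend on its own coordinate alone, a condition strictly stronger than mere gradient-existence. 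Both the symmetry test and this diagonality requirement will be driven by the same derivative computation.

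First I would instantiate the proposition with the two scalar maps that actually occur in the gate, $g=\sigma$ and $h=\mathrm{id}$ (the latter trivially sector-bounded and slope-restricted), and read off the gate exactly as it appears in \eqref{eq:GRU-SNOF-transformed}, namely $f_i=\sigma\!\bigl(q^{(z)}_i\bigr)\,q^{(h)}_i$, in which the gate and the gated signal are two \emph{distinct} coordinates of the nonlinearity input. Differentiating gives the cross term $\partial f_i/\partial q^{(z)}_i=\sigma'\!\bigl(q^{(z)}_i\bigr)\,q^{(h)}_i$, which is not identically zero since $\sigma'>0$ and $q^{(h)}_i$ ranges freely; this already shows $f$ is not diagonal and so cannot be absorbed into the separable storage of \eqref{eq:modified_lyapunov}. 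For the sharper ``$\nabla U=f$'' claim I would apply Clairaut to the pair $\bigl(q^{(z)}_i,q^{(h)}_i\bigr)$: because \eqref{eq:GRU-SNOF-transformed} assigns \emph{no} output $p$-component to the gate pre-activation $q^{(z)}_i$ (the returned blocks are only $p^{(h)},p^{(\tilde h)},p^{(rh)}$), the field component along that coordinate is null, and the symmetry identity collapses to $0=\sigma'\!\bigl(q^{(z)}_i\bigr)\,q^{(h)}_i$, a contradiction on any open set with $q^{(h)}_i\neq 0$. Hence no scalar $U$ reproduces the gate.

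The step I expect to be the main obstacle is pinning down the qualifier \emph{in general}, because the naive fully coordinatewise reading is the genuine exception. If the gate and the gated quantity happen to share one argument $x_i$, then $f_i=\sigma(x_i)\,x_i=\phi(x_i)$ is diagonal and conservative, with potential $U(x)=\sum_i\int_0^{x_i}\sigma(s)\,s\,ds$; this is precisely the configuration the statement must exclude, and it is never realized by a trained cell, whose pre-activations $q^{(z)}$ and $q^{(h)}$ are independent affine images of the augmented state. To make this rigorous I would (i) acknowledge that the GRU nonlinearity is non-square ($\mathbb{R}^{5n}\!\to\!\mathbb{R}^{3n}$), so the bare symbol $\nabla U=f$ has to be read as compatibility with the separable ansatz of \eqref{eq:modified_lyapunov} and Theorem~\ref{th:stabtheorem}, and (ii) show that the defect $\sigma'\!\bigl(q^{(z)}_i\bigr)\,q^{(h)}_i$ can vanish identically only on a measure-zero set of gate/gating weight configurations, so that for almost every GRU the separable Luré–Postnikov storage—and hence the Lyapunov construction underlying the LMI of Theorem~\ref{th:stabtheorem}—is unavailable.
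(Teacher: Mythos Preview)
Your approach coincides with the paper's at the level of the core test: both invoke the gradient-field characterization via symmetric Jacobian (equivalently zero curl on $\mathbb{R}^n$) and conclude that the Hadamard-gated nonlinearity fails it in general. Where you go further is in (i) actually computing the offending mixed partial $\partial f_i/\partial q^{(z)}_i=\sigma'(q^{(z)}_i)\,q^{(h)}_i$ for the concrete GRU gate rather than asserting that such terms ``typically'' fail to agree, (ii) recognizing explicitly that the literal componentwise reading of the proposition---$g$ and $h$ applied to the \emph{same} coordinate $x_i$---produces a diagonal and hence conservative field, so that the ``in general'' qualifier is doing real work and must be cashed out via the distinct pre-activation coordinates $q^{(z)}$ and $q^{(h)}$ of \eqref{eq:GRU-SNOF-transformed}, and (iii) tying the failure not just to abstract potential existence but to the \emph{separable} storage ansatz in \eqref{eq:modified_lyapunov}, which is the operationally relevant hypothesis for Theorem~\ref{th:stabtheorem}. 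The paper's proof remains at the assertive level and defers the diagonal exception to a follow-up remark; your version supplies the explicit counterexample calculation and the measure-zero caveat that make the ``in general'' claim precise.
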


\begin{proof}
For a scalar function \(f:\mathbb{R}\to\mathbb{R}\), one can define a scalar potential
\[
   V(x) \;=\; \int_{0}^{x} f(s)\,ds
\]
uniquely because there is only a single coordinate of integration. However, in multiple dimensions a path-independent integral
\[
   U(x) \;=\; \int_{\gamma} f(s)\,\mathrm{d}s
\]
defines a well-defined global potential if and only if \(f\) is a gradient field, which requires that \(\nabla \times f = 0\) in \(\mathbb{R}^n\).  If \(f\) is constructed by the Hadamard (componentwise) product of two nontrivial vector functions \(g\) and \(h\), it typically has mixed partial derivatives that fail to agree, so its Jacobian is not symmetric.  This violates the condition for exactness (zero curl), making \(f\) nonconservative.

Because \(\int_{\gamma} f(s)\,\mathrm{d}s\) then depends on the path \(\gamma\), there is no single scalar function \(U\) whose gradient equals \(f\).  Hence, a Lur\'e--Postnikov argument relying on a unique storage function fails in general when using these Hadamard products.
\end{proof}

\begin{remark}
The incompatibility shown in the preceding proposition directly violates the requirements of Theorem~\ref{th:stabtheorem}. Specifically, the composite Lyapunov function in \eqref{eq:modified_lyapunov} relies on path-independent line integrals of the form $\int_0^{q_{k,i}} \phi_i(\sigma)\,d\sigma$ for each nonlinearity component. When $\phi$ involves Hadamard products as in $p_k^{(h)} = z_k \odot q_k^{(h)}$, the resulting multivariate mapping lacks a conservative vector field, making these integrals path-dependent. \textbf{Consequently, the LMI formulation in Theorem~\ref{th:stabtheorem} becomes infeasible:} the matrices $Q$ and $\tilde{Q}$ in the LMI cannot capture the energy contributions from path-dependent nonlinearities, as the integral terms in \eqref{eq:modified_lyapunov} are undefined. This mathematical impossibility forces any analysis of Hadamard-gated systems to resort to conservative outer bounds or abandon the Lyapunov approach entirely, resulting in either infeasible LMIs or overly restrictive stability conditions.
\end{remark}

\begin{remark}
In special or trivial cases, e.g. if each component of \(f\) depends only on one component of \(x\) such that the Jacobian is diagonal or block-diagonal, the field may remain conservative.  But for typical multi-coordinate gating - as in GRUs - cross-terms produce non-zero curl, violating the necessary condition for a global potential.
\end{remark} 

The inability to define a well-behaved closed-loop storage function means that the Hadamard product-based gating mechanism of the GRU (and similarly of the LSTM) cannot be transformed into a SNOF-compatible form suitable for the application of the stability theorem \ref{th:stabtheorem}. To address this incompatibility, we introduce the LP-GRNN — a modified gated recurrent architecture designed to preserve the benefits of GRUs while enabling transformation into the SNOF. This makes it compatible with nonconservative Lur’e-Postnikov stability analysis.

\section{Introduction of LP-GRNN}\label{sec:LP-GRNN}
As demonstrated in the previous section, the state-dependent multiplicative interaction of the standard update gate $z_k$ prevents the construction of a path-independent storage function required for absolute stability analysis. To overcome this structural incompatibility, we propose a modification that trades dynamic gating flexibility for strict structural compliance with the Luré-Postnikov framework.The core idea of the proposed \emph{Luré-Postnikov Gated Recurrent Neural Network} (LP-GRNN) is to decouple the gating coefficient from the current state signal. By replacing the time-varying, state-dependent update gate $z_k$ with a learnable but time-invariant mixing vector $\alpha$, we restore the affine nature of the state update equation. This modification effectively transforms the complex multiplicative nonlinearity into a simpler sector-bounded nonlinearity, making the architecture applicable to SNOF transformation while retaining the capability to learn optimal time constants for memory retention. The new LP-GRNN structure is defined as follows and visualized in Fig. \ref{fig:CGRU}.

\begin{equation}\label{eq:lp-grnn}
\begin{aligned}
r_k & = \sigma(W_{r,x} x_k + b_{r,x} + W_{r,h} h_{k-1} + b_{r,h}) ,\\
\tilde{h}_k &= \tanh(W_{\tilde{h},x} x_t + b_{\tilde{h},x} + G_{\tilde{h},r} r_k + b_{\tilde{h},r} \\ 
&\quad+ W_{\tilde{h},h} h_{k-1} + b_{\tilde{h},h}) ,\\
h_k &= [1-\sigma(\alpha)] \tilde{h}_k + \sigma(\alpha)  h_{k-1} .
\end{aligned}
\end{equation}

\begin{figure}
  \centering
  \includegraphics[width=1\columnwidth]{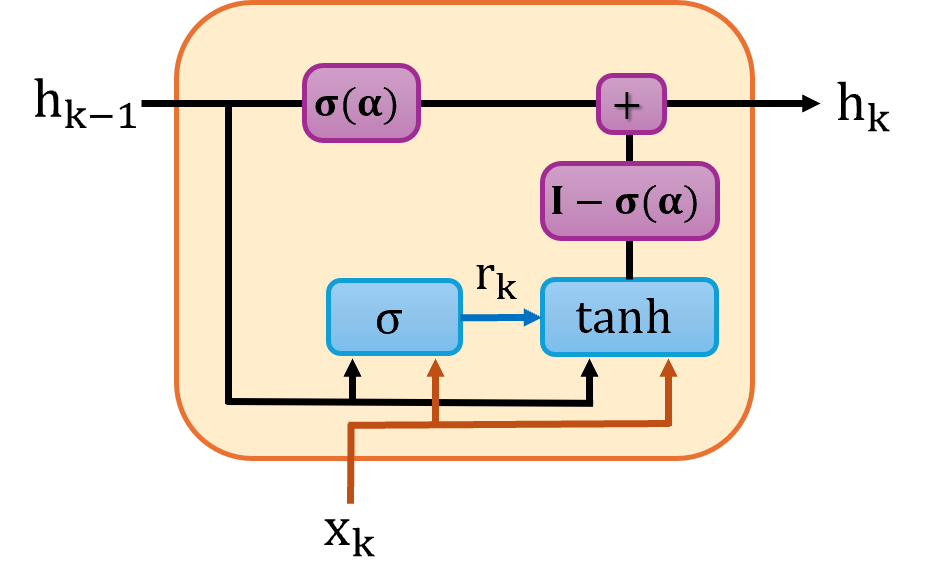}
  \caption{Architecture of the proposed LP-GRNN cell. The dynamic update gate $z_k$ of the standard GRU is replaced by a learnable time-invariant vector $\alpha$.}
  \label{fig:CGRU}
\end{figure}

It is important to acknowledge that fixing the update gate to a constant vector $\alpha$ simplifies the dynamic behavior compared to a standard GRU, where $z_k$ adapts at each time step based on the input. In the LP-GRNN, $\alpha$ acts as a static "leak rate" per neuron, optimized during training to capture the dominant temporal dynamics of the dataset. While this theoretically reduces the model's expressiveness regarding rapidly changing time constants, it is a necessary condition to obtain a well-posed SNOF without conservative over-approximation. As shown in Section \ref{sec:example}, this simplification does not lead to a statistically significant performance drop in practice. The compatibility with the SNOF-structure will be derived in the following.

\begin{proposition}[SNOF Representation of the LP-GRNN]

The LP-GRNN equation system of \eqref{eq:lp-grnn} can be equivalently rewritten as the SNOF:

\begin{equation}\label{eq:lp-grnn-snof}
\begin{aligned}    
\left[
\begin{smallmatrix}
x_k\\
q_k^{(\tilde h)}\\
q_k^{(r)}
\end{smallmatrix}\right]
&=\left[
\begin{smallmatrix}
I_h - \operatorname{dg}\sigma(\alpha) & \operatorname{dg}\sigma(\alpha) & 0_{h\times h} & 0_{h\times m} & 0_{h\times 1}\\
W_{\tilde h,h}            & 0_{h\times h}           & \tfrac12 I_h G_{\tilde h,r} & W_{\tilde h,x} & \tfrac12 G_{\tilde h,r} \mathbf1_h + b_{\tilde h}\\
\tfrac12 I_h W_{r,h}    & 0_{h\times h}           & 0_{h\times h}                 & \tfrac12 I_h W_{r,x} & \tfrac12\,I_h b_r
\end{smallmatrix}\right] \\
&\quad\quad\cdot\left[
\begin{smallmatrix}
x_{k+1}\\
p_k^{(\tilde h)}\\
p_k^{(r)}\\
u_k\\
1
\end{smallmatrix}\right],\\
b_{\tilde h} &= b_{\tilde h,x} + b_{\tilde h,r} + b_{\tilde h,h},
\qquad
b_r = b_{r,x} + b_{r,h}, \\
\tilde h_k &= p_k^{(\tilde h)} = \tanh\bigl(q_k^{(\tilde h)}\bigr),
\qquad
r_k        = p_k^{(r)}        = \tanh\bigl(q_k^{(r)}\bigr).
\end{aligned}
\end{equation}
\end{proposition}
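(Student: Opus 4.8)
The plan is to reduce everything to one exact algebraic identity, $\sigma(v)=\tfrac12+\tfrac12\tanh(v/2)$, which expresses each logistic gate as an affine function of a $\tanh$. With it, the only true nonlinearity in \eqref{eq:lp-grnn} is a diagonal $\tanh$, and every other operation is affine. Since $\alpha$ is a trainable \emph{parameter} rather than a signal, $\operatorname{dg}\sigma(\alpha)$ is a constant matrix that never passes through the nonlinear block, so only two $q/p$ pairs are required --- one for the reset gate $r_k$ and one for the candidate state $\tilde h_k$ --- and these furnish the two lower block rows of $M$, while the leaky mixing furnishes the top row.

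Concretely I would proceed in three steps. \emph{Step 1 (reset gate).} Set $q_k^{(r)}\triangleq\tfrac12\bigl(W_{r,x}u_k+W_{r,h}h_{k-1}+b_{r,x}+b_{r,h}\bigr)$, i.e.\ half the pre-activation, and $p_k^{(r)}\triangleq\tanh\!\bigl(q_k^{(r)}\bigr)$; the identity then gives $r_k=\tfrac12\mathbf 1_h+\tfrac12 p_k^{(r)}$, and reading off the coefficients of $(h_{k-1},u_k,1)$ produces the last block row of $M$ with $b_r\triangleq b_{r,x}+b_{r,h}$. \emph{Step 2 (candidate state).} Substitute $r_k=\tfrac12\mathbf 1_h+\tfrac12 p_k^{(r)}$ into the $\tanh$ argument of $\tilde h_k$ and lump constants: $q_k^{(\tilde h)}\triangleq W_{\tilde h,h}h_{k-1}+W_{\tilde h,x}u_k+\tfrac12 G_{\tilde h,r}p_k^{(r)}+\bigl(\tfrac12 G_{\tilde h,r}\mathbf 1_h+b_{\tilde h}\bigr)$ with $b_{\tilde h}\triangleq b_{\tilde h,x}+b_{\tilde h,r}+b_{\tilde h,h}$, and $p_k^{(\tilde h)}\triangleq\tanh\!\bigl(q_k^{(\tilde h)}\bigr)=\tilde h_k$; the coefficients give the middle block row, in particular the $\tfrac12 G_{\tilde h,r}$ feedthrough onto $p_k^{(r)}$. \emph{Step 3 (leaky update).} Because each entry of $\sigma(\alpha)$ is constant, $h_k=[\mathbf 1_h-\sigma(\alpha)]\odot\tilde h_k+\sigma(\alpha)\odot h_{k-1}$ becomes the affine relation $h_k=(I_h-\operatorname{dg}\sigma(\alpha))\,p_k^{(\tilde h)}+\operatorname{dg}\sigma(\alpha)\,h_{k-1}$; taking the recurrent state $h$ as the SNOF state gives the top block row. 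Stacking the three relations over the regressor $[\,h;\,p^{(\tilde h)};\,p^{(r)};\,u;\,1\,]$ yields the matrix $M$ of \eqref{eq:lp-grnn-snof} (up to the scalar-matrix writing $\tfrac12 I_h=\tfrac12 I$), and the linear output readout is appended as the usual $y$-row.

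Finally I would note well-posedness in the sense of Definition~\ref{def:SNOF}: $\tanh(0)=0$ and $\tanh$ is injective, so the required diagonal gain $\Delta_k$ exists; and since $q_k^{(r)}$ has no $p$-dependence while $q_k^{(\tilde h)}$ depends only on $p_k^{(r)}$, the feedthrough block $D_{qp}$ has a single off-diagonal block $\tfrac12 G_{\tilde h,r}$ and is block-strictly-triangular, hence nilpotent, so $I-D_{qp}\Delta_k$ is invertible for every admissible $\Delta_k$. I do not expect a deep obstacle; the only real care is the bias bookkeeping of Step~2 --- the $\tfrac12 G_{\tilde h,r}\mathbf 1_h$ term is exactly the footprint of the additive $\tfrac12$ in $\sigma=\tfrac12+\tfrac12\tanh$ --- together with the observation that this $\sigma$-to-$\tanh$ rewriting is an exact identity, not an approximation, so the LP-GRNN input--output map is preserved exactly and Theorem~\ref{th:stabtheorem} then applies because the sole nonlinearity $\tanh$ is sector bounded and slope restricted in $[0,1]$.
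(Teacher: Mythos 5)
Your derivation is correct and lands on the same matrices, but it executes the key step differently from the paper. Both arguments hinge on the exact identity $\sigma(v)=\tfrac12+\tfrac12\tanh(v/2)$; the paper, however, first writes the LP-GRNN as an intermediate SNOF-like system $\Sigma_k^{(M)}$ that still contains the sigmoid nonlinearity $p_k^{(r)}=\sigma(q_k^{(r)})$, then encodes the identity as a separate static loop-transformation system $\Sigma_k^{(G)}$ and absorbs it into $\Sigma_k^{(M)}$ via the Redheffer star product, which produces the final blocks through the general interconnection formulas involving $(I-D_{yp}^{(G)}D_{qp}^{(M)})^{-1}$. You instead perform the $\sigma$-to-$\tanh$ substitution by hand, redefining $q_k^{(r)}$ as half the pre-activation and propagating $r_k=\tfrac12\mathbf 1_h+\tfrac12 p_k^{(r)}$ into the candidate-state equation before reading off coefficients. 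Your route is more elementary and makes the origin of the $\tfrac12 G_{\tilde h,r}\mathbf 1_h$ bias and the $\tfrac12 G_{\tilde h,r}$ feedthrough completely transparent (indeed, your bookkeeping $r_k=\tfrac12\mathbf 1_h+\tfrac12 p_k^{(r)}$ is more precise than the statement's shorthand $r_k=p_k^{(r)}$); the paper's route buys generality, since the star-product formulas apply verbatim to any sector-shifting loop transformation and to deeper interconnections, at the cost of carrying matrix inverses that are trivial here only because the relevant feedthrough product is nilpotent. Two cosmetic remarks: your Step 3 follows \eqref{eq:lp-grnn} literally, whereas the target matrix in \eqref{eq:lp-grnn-snof} has the roles of $\operatorname{dg}\sigma(\alpha)$ and $I_h-\operatorname{dg}\sigma(\alpha)$ interchanged (a harmless reparametrization $\sigma(\alpha)\mapsto 1-\sigma(\alpha)$ already present in the paper); and your closing well-posedness observation duplicates what the paper proves as a separate proposition, so it is not needed for this statement.
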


\begin{proof}
We use the SNOF-like formulation procedure of the GRU resulting in \eqref{eq:GRU-SNOF-transformed} analogously to express the LP-GRNN in the following SNOF-like system $\mathrm{\Sigma}\mathit{^{(M)}_k}$:

\begin{equation}
\begin{aligned}
\mathrm{\Sigma}\mathit{^{(M)}_k} &:\left[\begin{smallmatrix}
\mathit{h_{k+1}} \\ 
\mathit{q_{i,k}} \\
\end{smallmatrix}\right]= \left[
\begin{smallmatrix}
A^{(M)} & B_p^{(M)} & B_u^{(M)} & \beta_x^{(M)}\\
C_q^{(M)} & D_{qp}^{(M)} & D_{qu}^{(M)} & \beta_q^{(M)}
\end{smallmatrix}
\right] \left[\begin{smallmatrix}
\mathit{h_k} \\
\mathit{p_{i,k}} \\
\mathit{x_k} \\
1
\end{smallmatrix}\right], i = 1,2,
\\ 
\left[\begin{smallmatrix}
\mathit{h_{k+1}} \\ 
\mathit{q^{(\tilde{h})}_k} \\
\mathit{q^{(r)}_k}
\end{smallmatrix}\right] &= 
\left[\begin{smallmatrix}
I - \operatorname{dg}\sigma(\alpha) & \operatorname{dg}\sigma(\alpha)\ & 0 & 0 & 0\\
W_{\tilde h,h} & 0 & G_{\tilde h,r} & W_{\tilde h,x} & b_{\tilde h,x}+b_{\tilde h,r}+b_{\tilde h,h}\\
W_{r,h}        & 0 & 0                 & W_{r,x}        & b_{r,x}+b_{r,h}
\end{smallmatrix}\right] 
\left[\begin{smallmatrix}
\mathit{h_k} \\
\mathit{p^{(\tilde{h})}_k} \\
\mathit{p^{(r)}_k} \\
\mathit{x_k} \\
1
\end{smallmatrix}\right] , \\
p^{(\tilde{h})}_k &= \tanh(q_k^{(\tilde{h})}) ,\\
p^{(r)} &= \sigma(q_k^{(r)}) .
\end{aligned}
\end{equation}

For correspondence with the sector boundedness and slope restriction criterion, we need to transform the sigmoid gating function of \( p_k^{(r)} \), as it is not sector bounded. Therefore we need to loop transform the system as stated in \cite{Kim2018StandardRA}. We use the fact that the hyperbolic tangent is slope restricted and sector bounded, and that $p_k^{(r)} = \sigma(q_k^{(r)}) = \frac{1}{2}\cdot\tanh(\frac{1}{2}q_k^{(r)})+\frac{1}{2}$. Thus, we can rewrite the sigmoid function as:

\begin{equation}
\begin{aligned}
    \begin{bmatrix}q^{'(r)}_k\\ p_k^{(r)}\end{bmatrix} &= \begin{bmatrix}
        0 & \frac{1}{2}I & 0\\
        \frac{1}{2}I & 0 & \frac{1}{2}I
        \end{bmatrix}\cdot\begin{bmatrix}p^{'(r)}_k\\q_k^{(r)}\\1\end{bmatrix} ,\\
    p^{'(r)}_k &= \Gamma(q^{'(r)}_k) = \tanh(q^{'(r)}_k) .
\end{aligned}
\end{equation}

We can rewrite this as a system in the form of:

\begin{equation}
\begin{aligned}
\mathrm{\Sigma}\mathit{^{(G)}_k} &:=\left\{ \begin{aligned} 
  q^{'(r)}_k &= \mathit{C^{(G)}_q}p^{'(r)}_k +  \mathit{D^{(G)}_{qp}}q_k^{(r)} + \mathit{D^{(G)}_{qu}}\cdot I \\ p_k^{(r)} &= \mathit{C^{(G)}_y}p^{'(r)}_k +  \mathit{D^{(G)}_{yp}}q_k^{(r)} + \mathit{D^{(G)}_{yu}}\cdot I
  \end{aligned} \right.,\\
  p^{'(r)}_k &= \tanh(q^{'(r)}_k) .
\end{aligned}
\end{equation}

The process of loop transformation is visualized in Fig. \ref{fig:loop_transformation}. Now consider a extended SNOF formulation given by:

\begin{equation}\label{eq:ex_SNOF}
\begin{aligned}
    \begin{bmatrix*}
        x_k \\ q_k
    \end{bmatrix*} &=     \begin{bmatrix*}
        M_{11} & M_{12} & M_{13} & M_{14} \\
        M_{21} & M_{22} & M_{23} & M_{24}
    \end{bmatrix*}    \begin{bmatrix*}
        x_{k+1} \\ p_k \\ u_k \\ 1
    \end{bmatrix*},\\ p_k &= \Gamma(q_k) = \tanh(q_k).
\end{aligned}
\end{equation}

We can describe our interconnected systems in this form, by using the Redheffer star product and absorbing system $\Sigma_k^{(G)}$ in $\Sigma_k^{(M)}$:

\begin{equation}\label{eq:NNSNOFpre}
\begin{aligned}
\begin{bmatrix}
h_k\\
q_{i,k}
\end{bmatrix}
&= M
\begin{bmatrix}
x_{k+1}\\
p_{i,k}\\
u_k\\
1
\end{bmatrix},\\
p^{(n)}_{i,k} &= \tanh\bigl(q^{(n)}_{i,k}\bigr).
\end{aligned}
\end{equation}

with the sub-blocks of \(M\) defined as:
\begin{equation}\label{eq:M}
\begin{aligned}
\left[\begin{smallmatrix*}M_{12} & M_{14} 
\end{smallmatrix*}\right]&=  B_p^{(G)}(I_{n_p\cdot h}-D_{yp}^{(G)}D_{qp}^{(M)})^{-1}\left[\begin{smallmatrix*}C_y^{(G)} & D_{yu}^{(G)} 
    \end{smallmatrix*}\right] \\ &\kern-2.5em+ \left[\begin{smallmatrix*}0&\beta_x^{(M)}\end{smallmatrix*}\right],\\
    \left[\begin{smallmatrix*}M_{11} & M_{13} 
\end{smallmatrix*}\right] &= \left[\begin{smallmatrix*}
        A^{(M)} & B_{u}^{(M)}
    \end{smallmatrix*}\right]\\&\kern-2.5em+B_{p}^{(M)}(I_{n_p\cdot h}-D_{yp}^{(G)}D_{qp}^{(M)})^{-1} D_{yp}^{(G)}\left[\begin{smallmatrix*}C_q^{(M)} &D_{qu}^{(M)}\end{smallmatrix*}\right], \\
    \left[\begin{smallmatrix*}M_{22} & M_{24} 
\end{smallmatrix*}\right] &=\left[\begin{smallmatrix*}
        C_q^{(G)} & D_{qu}^{(G)}
    \end{smallmatrix*}\right]\\&\kern-2.5em+D_{qp}^{(G)}(I_{n_p\cdot h}-D_{qp}^{(M)}D_{yp}^{(G)})^{-1}D_{qp}^{(M)}\left[\begin{smallmatrix*}C_y^{(G)} & D_{yu}^{(G)}+\beta_q^{(M)}
    \end{smallmatrix*}\right],\\
    \left[\begin{smallmatrix*}M_{21} & M_{23} 
\end{smallmatrix*}\right] &= D_{qp}^{(G)}(I_{n_p\cdot h}-D_{qp}^{(M)}D_{yp}^{(G)})^{-1}\left[\begin{smallmatrix*}C_q^{(M)} & D_{qu}^{(M)}
    \end{smallmatrix*}\right].
\end{aligned}
\end{equation}

\begin{figure}
  \centering
  \includegraphics[width=1\columnwidth]{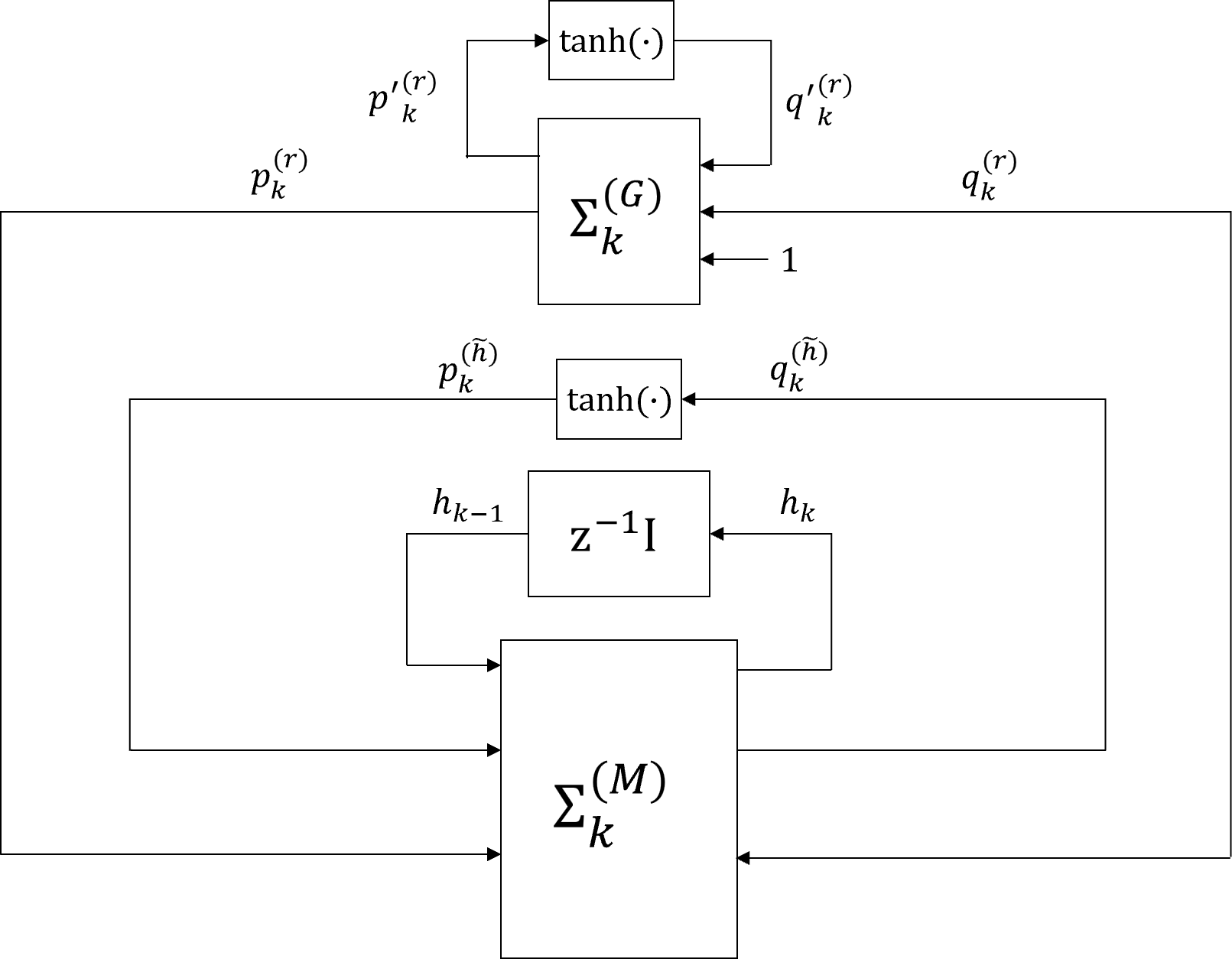}
  \caption{Loop transformation of the sigmoid activation. The identity 
  $\sigma(q) = \tfrac{1}{2}\tanh(\tfrac{1}{2}q) + \tfrac{1}{2}$ enables 
  reformulation of the gating mechanism using only $\tanh(\cdot)$ nonlinearities, which are sector bounded in $[0,1]$ and slope restricted, thereby satisfying the requirements of Definitions~\ref{def:sb} and~\ref{def:sr}.}
  \label{fig:loop_transformation}
\end{figure}

The calculation of the loop transformation leads to the proposed system representation of the LP-GRNN. \end{proof}

This transformed system now includes only hyperbolic tangent nonlinearities, which are sector-bounded and slope-restricted within [0,1] and satisfy the Definition \ref{def:sb} and Definition \ref{def:sr}, as shown in \cite{10394670}. As the nonlinearities and the structure now aligns with the SNOF framework, we show the well-posedness of the in \eqref{eq:lp-grnn-snof} proposed system:

\begin{proposition}
    The system proposed in \eqref{eq:lp-grnn-snof} is a well-posed SNOF, as defined by Definition \ref{def:SNOF}.
\end{proposition}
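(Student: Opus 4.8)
The plan is to verify the two requirements of Definition~\ref{def:SNOF} for the system \eqref{eq:lp-grnn-snof}: that its memoryless nonlinearity satisfies $\Gamma^{(n)}(0)=0$ and is injective (so that a diagonal time-varying gain $\Delta_k$ with $\Gamma^{(n)}(q_k)=\Delta_k q_k$ exists for bounded $q_k$), and that $R=I-M_{22}\Delta_k$ is invertible for every admissible $\Delta_k$. First I would dispatch the nonlinearity condition: after the loop transformation performed in the preceding proof, $\Gamma^{(n)}$ acts diagonally and component-wise through $\tanh(\cdot)$ on the stacked vector $[q_k^{(\tilde h)\top},\,q_k^{(r)\top}]^\top$. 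Since $\tanh$ is odd we have $\tanh(0)=0$, hence $\Gamma^{(n)}(0)=0$; and since $\tanh$ is strictly increasing it is injective, so $\Gamma^{(n)}$ is injective. Thus Definition~\ref{def:SNOF} applies, and for bounded $q_k$ we may take $\Delta_k=\diag\!\big(\tanh(q_{k,1})/q_{k,1},\dots,\tanh(q_{k,h})/q_{k,h}\big)$, each ratio extended by continuity to $1$ at the origin, so that every diagonal entry lies in $(0,1]$. This is exactly the sector-$[0,1]$ property of $\tanh$ secured by the loop transformation of Fig.~\ref{fig:loop_transformation}, giving $0\prec\Delta_k\preceq I_h$ for all admissible $\Delta_k$.

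Next I would identify the feedthrough block $M_{22}$, i.e.\ the sub-matrix of $M$ mapping $p_k=[p_k^{(\tilde h)\top},\,p_k^{(r)\top}]^\top$ into $q_k=[q_k^{(\tilde h)\top},\,q_k^{(r)\top}]^\top$. Reading the coefficients directly off \eqref{eq:lp-grnn-snof} (equivalently, by tracking the Redheffer composition in \eqref{eq:M}), the row producing $q_k^{(r)}$ carries no $p_k^{(\tilde h)}$ and no $p_k^{(r)}$ contribution, while the row producing $q_k^{(\tilde h)}$ depends on $p_k^{(r)}$ through $G_{\tilde h,r}$ but not on $p_k^{(\tilde h)}$. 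Hence $M_{22}$ is strictly block upper-triangular, $M_{22}=\left[\begin{smallmatrix}0 & \ast\\ 0 & 0\end{smallmatrix}\right]$, and therefore nilpotent of index two, $M_{22}^2=0$. The conclusion is then immediate: for any diagonal $\Delta_k$ the product $M_{22}\Delta_k$ inherits the same strictly block upper-triangular pattern, so $(M_{22}\Delta_k)^2=0$ and $R=I-M_{22}\Delta_k$ is invertible with explicit inverse $R^{-1}=I+M_{22}\Delta_k$. Since this holds for every admissible $\Delta_k$—indeed for all diagonal $\Delta_k$—the system \eqref{eq:lp-grnn-snof} is a well-posed SNOF. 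Structurally, the reason is that the LP-GRNN induces the acyclic computation chain $q_k^{(r)}\to p_k^{(r)}\to q_k^{(\tilde h)}\to p_k^{(\tilde h)}\to h_{k+1}$, so no nonlinearity's output feeds instantaneously back into its own argument and there is no algebraic loop to resolve.

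I expect the only delicate point to be bookkeeping rather than mathematics: confirming that the loop transformation of the sigmoid gate, which introduces a $D_{qp}^{(G)}=\tfrac12 I$ feedthrough in $\Sigma_k^{(G)}$, does not open a new algebraic loop in the star product. This is guaranteed because $D_{yp}^{(G)}=0$ and the $q_k^{(r)}$ row of $\Sigma_k^{(M)}$ carries no $p$-feedthrough, so the lower-left and diagonal $p\!\to\!q$ blocks of the composed $M_{22}$ stay zero and nilpotency is preserved through \eqref{eq:M}. (One could instead invoke a small-gain bound $\|M_{22}\|\,\|\Delta_k\|<1$, but the nilpotency argument is sharper and assumption-free, since it certifies invertibility of $R$ for arbitrary $\Delta_k$.)
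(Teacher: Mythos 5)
Your proposal is correct and follows essentially the same route as the paper: verify that the componentwise $\tanh$ nonlinearity is injective with $\tanh(0)=0$, build the diagonal gain $\Delta_k$ from the sector-$[0,1]$ ratios, observe that $M_{22}$ is strictly block-triangular and hence nilpotent of index two, and conclude $R=I-M_{22}\Delta_k$ is invertible with $R^{-1}=I+M_{22}\Delta_k$. If anything, your justification that $(M_{22}\Delta_k)^2=0$ via preservation of the zero pattern under right-multiplication by a diagonal matrix is slightly cleaner than the paper's commutation step $M_{22}\Delta_k M_{22}\Delta_k=M_{22}^2\Delta_k^2$.
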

\begin{proof}
For the LP‑GRNN all static nonlinearities are component‑wise
\(\phi_i(\,\cdot\,)=\tanh(\,\cdot\,)\).
Because $\tanh$ is strictly monotone, odd and satisfies $\tanh(0)=0$,
it is injective and lies in the sector $[0,1]$ with slope restriction
$0\le\phi'_i(\sigma)\le 1$ for every $\sigma\in\mathbb{R}$.
Define
\[
\Delta_{k,ii}\;=\;
\begin{cases}
\dfrac{\phi_i(q_{k,i})}{q_{k,i}}, & q_{k,i}\neq 0,\\[6pt]
\phi'_i(0)=1, & q_{k,i}=0,
\end{cases}
\qquad i=1,\dots,n_q .
\]
Then $\Delta_k=\operatorname{dg}(\Delta_{k,11},\dots,\Delta_{k,n_qn_q})$
satisfies $\Gamma(q_k)=\Delta_k q_k$ for every $k\in\mathbb{Z}_{\ge 0}$,
and by construction $0\le\Delta_{k,ii}\le 1$. From \eqref{eq:lp-grnn-snof} and \eqref{eq:ex_SNOF} the $M_{22}$ block of the SNOF is
\[
M_{22}=\begin{bmatrix}
\mathbf 0_{h\times h} & \tfrac12\,G_{\tilde h,r}\\[2pt]
\mathbf 0_{h\times h} & \mathbf 0_{h\times h}
\end{bmatrix},
\quad
G_{\tilde h,r}\in\mathbb{R}^{h\times h}.
\]
Hence \(M_{22}\) is \emph{nilpotent of index 2} because \(M_{22}^2=0\). For any diagonal \(\Delta_k\) one has
\[
N\;:=\;M_{22}\Delta_k, \qquad N^2
    =M_{22}\Delta_k M_{22}\Delta_k
    =M_{22}^2\Delta_k^2=0.
\]
Because $N^2 = 0$ (nilpotent of index 2), we have
\[
(I-N)(I+N) \;=\; I - N^2 \;=\; I
\] and \[
(I+N)(I-N) \;=\; I - N^2 \;=\; I,
\]
hence $R^{-1} = I+N$. So $\det R = 1$ and $R$ is invertible for \emph{every} admissible $\Delta_k$. Because a diagonal \(\Delta_k\) satisfying $\Gamma(q_k)=\Delta_k q_k$ exists for all $k$, and $R=I-M_{22}\Delta_k$ is always nonsingular, Definition \ref{def:SNOF} is fulfilled. Therefore, the LP‑GRNN in \eqref{eq:lp-grnn-snof} is a well‑posed SNOF.
\end{proof}

We successfully proposed a gated recurrent architecture compatible with the absolute stability theory. We can therefore proceed with the stability analysis of the problem stated in \ref{sec:pre}. 

\section{Closed-Loop Stability Analysis}\label{sec:stability}

Having demonstrated the incompatibility of standard Hadamard gating with Luré-Postnikov theory and introduced the LP-GRNN as a compatible alternative, we now develop a comprehensive stability analysis framework for the closed-loop system in Fig.~\ref{fig:problem}. Our approach uses the stability theorem of \cite{kim2009robust} (Theorem~\ref{th:stabtheorem}) but requires several novel theoretical developments to bridge modern virtual sensor architectures with classical absolute stability theory. We do not modify the existing stability theorem itself, but make gated recurrent architectures compatible while preserving their modeling capabilities. Specifically, we derive: (i) a holistic SNOF transformation that unifies plant, controller, and LP-GRNN dynamics into a single analyzable system, (ii) well-posedness conditions for this integrated system, and (iii) verification that the resulting closed-loop satisfies the structural requirements for applying the LMI-based stability conditions. 

\begin{proposition}[Holistic SNOF of the problem setup]
The closed loop system presented in Fig. \ref{fig:problem} and stated in Section \ref{sec:pre} can be formulated as SNOF like:
\begin{equation}\label{eq:CPNNSNOF}
 \begin{aligned} 
  \left[\begin{smallmatrix*}
  x^{(c)}_{k+1} \\
  x^{(p)}_{k+1} \\
  x^{(n)}_{k+1} 
  \end{smallmatrix*}\right] &= 
  \left[\begin{smallmatrix*}
  A^{(c)} & 0& 0\\
  0& A^{(p)} & 0\\
  0& B^{(n)}_{u}\,\sigma_{i}\,\tilde C^{(p)}_{\Delta} & A^{(n)} 
  \end{smallmatrix*}\right]
  \left[\begin{smallmatrix*}
  x^{(c)}_{k} \\
  x^{(p)}_{k} \\
  x^{(n)}_{k} 
  \end{smallmatrix*}\right]\\
  &\quad+
  \left[\begin{smallmatrix*}
  B^{(c)}_{p} & 0\\
  B^{(p)} & 0\\
  B^{(n)}_{u}\,\sigma_{i}\,\tilde D_{\Delta}^{(p)} & B^{(n)}_{p} 
  \end{smallmatrix*}\right]
  \left[\begin{smallmatrix*}
  p^{(c)}_{k} \\
  p^{(n)}_{k}
  \end{smallmatrix*}\right] 
  + 
  \left[\begin{smallmatrix*}
  B^{(c)}_{u} \\
  0\\ 
  0
  \end{smallmatrix*}\right]
  u^{(c)}_{k} \\&\quad+ 
  \left[\begin{smallmatrix*}
  0\\
  0\\
  \beta_{x} + B^{(n)}_{u}\,\theta_{i}
  \end{smallmatrix*}\right],\\[1em]
  \left[\begin{smallmatrix*}
  q^{(c)}_{k} \\
  q^{(n)}_{k}
  \end{smallmatrix*}\right] &= 
  \left[\begin{smallmatrix*}
  C^{(c)}_{q} & 0 & 0 \\
  0 & D^{(n)}_{qu}\,\sigma_{i}\,\tilde C^{(p)}_{\Delta} & C^{(n)}_{q}
  \end{smallmatrix*}\right]  
  \left[\begin{smallmatrix*}
  x^{(c)}_{k} \\
  x^{(p)}_{k} \\
  x^{(n)}_{k} 
  \end{smallmatrix*}\right]\\
  &\quad+
  \left[\begin{smallmatrix*}
  D^{(c)}_{qp} & 0\\
  D^{(n)}_{qu}\,\sigma_{i}\,\tilde D_{\Delta}^{(p)} & D^{(n)}_{qp}
  \end{smallmatrix*}\right]
  \left[\begin{smallmatrix*}
  p^{(c)}_{k} \\
  p^{(n)}_{k}
  \end{smallmatrix*}\right]
  + 
  \left[\begin{smallmatrix*}
  D^{(c)}_{qu} \\
  0
  \end{smallmatrix*}\right]
  u^{(c)}_{k} \\&\quad+
  \left[\begin{smallmatrix*}
  0 \\
  \beta_{h} + D^{(n)}_{qu}\,\theta_{i}
  \end{smallmatrix*}\right],\\[1em]
  \left[\begin{smallmatrix*}
  y^{(p)}_{k,\Delta}\\
  y^{(n)}_{k}
  \end{smallmatrix*}\right] &= 
  \left[\begin{smallmatrix*}
  0 & C^{(p)}_{\Delta} & 0 \\
  0 & \sigma_{o}\,D^{(n)}_{yu}\,\sigma_{i}\,\tilde C^{(p)}_{\Delta} & \sigma_{o}\,C^{(n)}_{y}
  \end{smallmatrix*}\right]
  \left[\begin{smallmatrix*}
  x^{(c)}_{k} \\
  x^{(p)}_{k} \\
  x^{(n)}_{k} 
  \end{smallmatrix*}\right]\\
  &\quad+
  \left[\begin{smallmatrix*}
  D_{\Delta}^{(p)} & 0 \\
  \sigma_{o}\,D^{(n)}_{yu}\,\sigma_{i}\,\tilde D_{\Delta}^{(p)} & \sigma_{o}\,D^{(n)}_{yp}
  \end{smallmatrix*}\right]
  \left[\begin{smallmatrix*}
  p^{(c)}_{k} \\
  p^{(n)}_{k}
  \end{smallmatrix*}\right]
  + 
  \left[\begin{smallmatrix*}
  0 \\
  0
  \end{smallmatrix*}\right]
  u^{(c)}_{k} \\&\quad+
  \left[\begin{smallmatrix*}
  0 \\
  \sigma_{o}\,\beta_{o} + \sigma_{o}\,D^{(n)}_{yu}\,\theta_{i} + \theta_{o}
  \end{smallmatrix*}\right],\\[1em]
  p^{(n)}_{k} &= \Gamma^{(n)}(q^{(n)}_{k}), 
  \quad
  p^{(c)}_{k} = \Gamma^{(c)}(q^{(c)}_{k}).
\end{aligned}
\end{equation}
\end{proposition}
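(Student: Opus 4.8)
The plan is to obtain \eqref{eq:CPNNSNOF} purely by \emph{interconnecting} the three blocks of Fig.~\ref{fig:problem}: each already admits a SNOF realization, so the work is to stack states and nonlinearity channels and then eliminate the internal signals along the wiring of the figure. The only genuinely structural point to watch is that the stacked nonlinearity stays component-wise, so that the result is a \emph{bona fide} SNOF rather than merely a ``SNOF-like'' form.

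\textbf{Step 1: component SNOFs.} I would first record a SNOF for each block. The plant $\Sigma^{(p)}_{k,\Delta}$ is linear --- a degenerate SNOF with no nonlinearity channel --- with state matrix $A^{(p)}$, control channel $B^{(p)}$, measured output $(C^{(p)}_\Delta,D^{(p)}_\Delta)$, and the remaining plant signals routed to the sensor collected into $(\tilde C^{(p)}_\Delta,\tilde D^{(p)}_\Delta)$. The saturated PI controller $\Sigma^{(c)}_k$ is a SNOF with a single component-wise nonlinearity $\Gamma^{(c)}$ (the saturation), arranged so that its nonlinearity output $p^{(c)}_k$ is the control signal. The LP-GRNN $\Sigma^{(n)}_k$ is the SNOF \eqref{eq:NNSNOF}/\eqref{eq:lp-grnn-snof} with component-wise $\Gamma^{(n)}$ (all $\tanh$ after the loop transformation of Section~\ref{sec:LP-GRNN}); crucially, its input is the \emph{normalized} sensor feed, which I would make explicit as $u^{(n)}_k=\sigma_i(\tilde C^{(p)}_\Delta x^{(p)}_k+\tilde D^{(p)}_\Delta p^{(c)}_k)+\theta_i$ and its output de-normalization as $y^{(n)}_k=\sigma_o(C^{(n)}_y x^{(n)}_k+D^{(n)}_{yp}p^{(n)}_k+D^{(n)}_{yu}u^{(n)}_k+\beta_o)+\theta_o$, with $\sigma_i,\sigma_o$ the normalization scaling matrices and $\theta_i,\theta_o$ the offset vectors of the data-normalization layers.

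\textbf{Step 2: interconnection and collection.} Next I would write the wiring constraints of Fig.~\ref{fig:problem} --- error formation $e_k=r_k-\{y^{(p)}_{k,\Delta},y^{(n)}_k\}$, controller output to plant input $u^{(p)}_k=p^{(c)}_k$, plant signals to sensor input as in Step~1 --- then stack $x_k=[x^{(c)}_k;x^{(p)}_k;x^{(n)}_k]$, $p_k=[p^{(c)}_k;p^{(n)}_k]$, $q_k=[q^{(c)}_k;q^{(n)}_k]$ and $\Gamma=\mathrm{blkdiag}(\Gamma^{(c)},\Gamma^{(n)})$, still component-wise. Substituting the constraints into the three component state, $q$- and $y$-equations and gathering coefficients should reproduce \eqref{eq:CPNNSNOF}: the block-diagonal $\mathrm{diag}(A^{(c)},A^{(p)},A^{(n)})$ with the lone off-diagonal $B^{(n)}_u\sigma_i\tilde C^{(p)}_\Delta$ from the $B^{(n)}_u u^{(n)}_k$ term; the $p$-block entries $B^{(n)}_u\sigma_i\tilde D^{(p)}_\Delta$ and $B^{(n)}_p$ in the sensor row; the exogenous column $[B^{(c)}_u;0;0]$ and constant column $[0;0;\beta_x+B^{(n)}_u\theta_i]$; the $q$-equation by the identical substitution; and the $y$-equation picking up the extra $\sigma_o$ factors and the offset $\sigma_o\beta_o+\sigma_o D^{(n)}_{yu}\theta_i+\theta_o$ from de-normalization. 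A block-by-block comparison with \eqref{eq:CPNNSNOF} then closes the argument.

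\textbf{Main obstacle.} The hard part is the algebraic feedthrough across blocks. Because the controller, the sensor output map, and the plant output map all carry direct $D$-terms, the naive substitution of Step~2 creates an implicit loop: $q^{(c)}_k$ feeds, via the saturation $p^{(c)}_k\to u^{(p)}_k\to y^{(p)}_{k,\Delta}\to u^{(n)}_k\to y^{(n)}_k\to e_k\to u^{(c)}_k$, back into $q^{(c)}_k$ --- on top of the LP-GRNN's internal $p^{(n)}\!\to\!q^{(n)}$ feedthrough already treated in Section~\ref{sec:LP-GRNN}. The plan is to close these exactly as that sigmoid loop was closed: express the interconnection as a Redheffer star product / LFT in the style of \cite{Kim2018StandardRA}, invert the associated $I-(\text{feedthrough})$ block, and absorb the result into the ``tilde'' matrices $\tilde C^{(p)}_\Delta,\tilde D^{(p)}_\Delta$, so that \eqref{eq:CPNNSNOF} emerges already loop-free. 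Nonsingularity of that feedthrough block is precisely the well-posedness issue, which I would settle in the following proposition by a structure/nilpotency argument analogous to the LP-GRNN case. A secondary, purely bookkeeping hazard is keeping the scaling maps $\sigma_i,\sigma_o$ and offsets $\theta_i,\theta_o$ on the correct side of the plant-output channels and of the sensor biases $\beta_x,\beta_h,\beta_o$ --- the easiest way to mismatch a block.
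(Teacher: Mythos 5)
Your Steps 1 and 2 are exactly the paper's proof: the plant is written as the (linear) SNOF \eqref{eq:SSPLANT} with the augmented output pair $(\tilde C^{(p)}_\Delta,\tilde D^{(p)}_\Delta)$, the saturated PI controller as \eqref{eq:CSNOF}, the LP-GRNN as \eqref{eq:NNSNOF}, the scaling maps as \eqref{eq:P-NN-relation}, and the blocks are stacked and merged using precisely the two identities $p^{(c)}_k=u^{(p)}_k$ and $\tilde{\tilde y}^{(p)}_{k,\Delta}=u^{(n)}_k$. Up to that point your proposal and the paper coincide.

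Your ``Main obstacle'' paragraph, however, misdiagnoses the construction and, if executed, would not reproduce \eqref{eq:CPNNSNOF}. First, the algebraic loop you describe ($p^{(c)}_k\to u^{(p)}_k\to y^{(p)}_{k,\Delta}\to u^{(n)}_k\to y^{(n)}_k\to e_k\to u^{(c)}_k\to q^{(c)}_k$) only exists if you substitute the outer feedback $u^{(c)}_k=e_k=r_k-y_k$ into the controller equations. The target formula does \emph{not} do this: $u^{(c)}_k$ is retained as an exogenous input (note the columns $[B^{(c)}_u;0;0]$ and $[D^{(c)}_{qu};0]$), so no Redheffer star product or inversion of an $I-(\text{feedthrough})$ block occurs anywhere in the proposition, and indeed no inverse appears in \eqref{eq:CPNNSNOF}. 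The only genuine $p\to q$ feedthroughs --- $D^{(c)}_{qp}$, $D^{(n)}_{qp}$, and the cross term $D^{(n)}_{qu}\sigma_i\tilde D^{(p)}_\Delta$ --- are simply \emph{left in place} as the $D_{qp}$ block of the holistic SNOF; their harmlessness (invertibility of $I-M_{22}\Delta_k$) is deferred entirely to the subsequent well-posedness proposition rather than resolved during the assembly. Second, the tilde matrices are not the residue of any loop closure: they are defined a priori in \eqref{eq:SSPLANT} as $\tilde C^{(p)}_\Delta=[C^{(p)}_\Delta;0]$ and $\tilde D^{(p)}_\Delta=[D^{(p)}_\Delta;I_s]$, i.e.\ the measured plant output augmented by the plant input so that both can be fed to the virtual sensor. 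If you instead close the outer loop and absorb inverted feedthroughs into these matrices, you will obtain a different (possibly valid, but not the stated) realization. Drop that step and the remainder of your plan is the paper's argument.
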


\begin{proof}
Analogous to the transformation discussed by \cite{10394670}, we can convert our plant and controller into the SNOF.
Based on the problem setup in Fig. \ref{fig:problem}, we consider the plant $\mathrm{\Sigma}\mathit{^{(p)}_{k,\mathrm{\Delta}}}$ given by
\begin{equation}\label{eq:SSPLANT}
\begin{aligned}
\Sigma^{(p)}_{k,\Delta}:\;
\left\{
\begin{aligned}
x^{(p)}_{k+1} &= A^{(p)}\,x^{(p)}_{k} + B^{(p)}\,u^{(p)}_{k},\\
\widetilde y^{(p)}_{k,\Delta}
  &= \widetilde C^{(p)}_{\Delta}\,x^{(p)}_{k}
   + \widetilde D^{(p)}_{\Delta}\,u^{(p)}_{k},
\end{aligned}
\right. \\
\widetilde C^{(p)}_{\Delta}
= \begin{bmatrix}
C^{(p)}_{\Delta}\\[2pt]
0_{\,s\times t}
\end{bmatrix},
\quad
\widetilde D^{(p)}_{\Delta}
= \begin{bmatrix}
D^{(p)}_{\Delta}\\[2pt]
I_{s}
\end{bmatrix}.
\end{aligned}
\end{equation}
\noindent
Here $x^{(p)}_{k}\in\mathbb R^{t}$, 
$u^{(p)}_{k}\in\mathbb R^{s}$, 
$y^{(p)}_{k,\Delta}\in\mathbb R^{\,v-1}$, 
and $A^{(p)}\in\mathbb R^{t\times t}$, 
$B^{(p)}\in\mathbb R^{t\times s}$, 
$C^{(p)}_{\Delta}\in\mathbb R^{(v-1)\times t}$, 
$D^{(p)}_{\Delta}\in\mathbb R^{(v-1)\times s}$.  
The plant dimension parameters $t,s,v$ are model‑specific, with $m=v-1+s$.  

Scaling of both plant and network outputs gives
\begin{equation}\label{eq:P-NN-relation}
\begin{aligned}
\widetilde{\widetilde y}^{(p)}_{k,\Delta}
&= \sigma_i\bigl(\widetilde C^{(p)}_{\Delta}\,x^{(p)}_{k}
   + \widetilde D^{(p)}_{\Delta}\,u^{(p)}_{k}\bigr)
  + \theta_i,\\
y^{(n)}_{k}
&= \sigma_o\bigl(C^{(n)}_y\,x^{(n)}_{k}
   + D^{(n)}_{yp}\,p^{(n)}_{k}
   + D^{(n)}_{yu}\,u^{(n)}_{k}
   + \beta_o\bigr)
  + \theta_o.
\end{aligned}
\end{equation}

\noindent
The scaling parameters $\sigma_i,\theta_i,\sigma_o,\theta_o$ are computed from the training data (e.g. via standard or min–max scaling).  
We can also transform our discrete controller into the SNOF. As in \cite{10394670}, we assume the controller output to be constraint by min-max values. Therefore, the controller can be represented as follows:
\begin{equation}\label{eq:CSNOF}
\begin{aligned}
\Sigma_k^{(c)}&:\;
\left\{
\begin{aligned}
x_{k+1}^{(c)} &= A^{(c)}\,x_k^{(c)} + B_p^{(c)}\,p_k^{(c)} + B_u^{(c)}\,u_k^{(c)},\\
q_k^{(c)}     &= C_q^{(c)}\,x_k^{(c)} + D_{qp}^{(c)}\,p_k^{(c)} + D_{qu}^{(c)}\,u_k^{(c)}, 
\end{aligned}\right.\\
p_k^{(c)} &= \Gamma^{(c)}\!\bigl(q_k^{(c)}\bigr) = y_k^{(c)},\\
\Gamma^{(c)}\!\bigl(q_k^{(c)}\bigr)
&= \min\!\bigl\{\max\!\bigl(y_{\min}^{(c)},\,q_k^{(c)}\bigr),\,y_{\max}^{(c)}\bigr\}.
\end{aligned}
\end{equation}

\noindent
Here $x_k^{(c)},\,q_k^{(c)},\,p_k^{(c)},\,y_k^{(c)}\in\mathbb R^{s}$ and 
$u_k^{(c)}\in\mathbb R^{v}$.  
The matrices satisfy 
$A^{(c)},B_p^{(c)},C_q^{(c)},D_{qp}^{(c)}\in\mathbb R^{s\times s}$,
$B_u^{(c)},D_{qu}^{(c)}\in\mathbb R^{s\times v}$.  
The nonlinearity $\Gamma^{(c)}$ enforces the controller output limits 
$y_{\min}^{(c)}$ and $y_{\max}^{(c)}\in\mathbb R^{s}\,$.

As shown in \cite{10394670}, the nonlinearities of the controller satisfy the Definition \ref{def:sb} and Definition \ref{def:sr}. For the virtual sensor system $\Sigma^{(n)}_{k}$, we use the LP-GRNN in SNOF form as outlined in \eqref{eq:lp-grnn-snof}, which can be stated as \eqref{eq:NNSNOF}. We can combine the systems $\Sigma^{(p)}_{k,\Delta}$, $\Sigma^{(c)}_{k}$, and $\Sigma^{(n)}_{k}$. Utilizing the problem specific identities $p^{(c)}_{k} = u^{(p)}_k$ and $\Tilde{\Tilde{y}}^{(p)}_{k,\mathrm{\Delta}}=u^{(n)}_k$ leads to the proposed formulation of the holistic SNOF.
\end{proof}

We now have a closed-loop SNOF representation of our problem. In the following, we will show that this system is a well-posed SNOF.

\begin{proposition}
    The system as outlined in \eqref{eq:CPNNSNOF} is a well-posed SNOF as defined by Definition \ref{def:SNOF}.
\end{proposition}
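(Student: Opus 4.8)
The plan is to mirror the well-posedness argument already used for the stand-alone LP-GRNN, exploiting the block structure that the interconnection inherits. First I would observe that the memoryless nonlinearity of the holistic system \eqref{eq:CPNNSNOF} acts on the stacked vector $q_k=\bigl[\,{q^{(c)}_k}^{\!\top}\;\;{q^{(n)}_k}^{\!\top}\,\bigr]^{\!\top}$ via $p_k=\Gamma(q_k)=\bigl[\,{\Gamma^{(c)}(q^{(c)}_k)}^{\!\top}\;\;{\Gamma^{(n)}(q^{(n)}_k)}^{\!\top}\,\bigr]^{\!\top}$, i.e. componentwise on disjoint index blocks, while the plant $\Sigma^{(p)}_{k,\Delta}$ is purely LTI and contributes no nonlinearity. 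Since $\Gamma^{(n)}$ is, after the loop transformation of Section~\ref{sec:LP-GRNN}, a vector of $\tanh$ maps, and $\Gamma^{(c)}$ is the saturation in \eqref{eq:CSNOF} which is sector bounded and slope restricted in $[0,1]$ as established in \cite{10394670} (assuming the limits bracket the origin, so that $\Gamma^{(c)}(0)=0$), each admits a diagonal gain representation $\Gamma^{(\bullet)}(q^{(\bullet)}_k)=\Delta^{(\bullet)}_k q^{(\bullet)}_k$ with $0\le\Delta^{(\bullet)}_{k,ii}\le 1$, built exactly as in the LP-GRNN well-posedness proof. Stacking gives a block-diagonal gain $\Delta_k=\operatorname{dg}\!\bigl(\Delta^{(c)}_k,\Delta^{(n)}_k\bigr)$ that is valid for every $k\in\mathbb Z_{\ge 0}$.

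Next I would read the relevant $M_{22}$ block directly off the $q$-equation of \eqref{eq:CPNNSNOF}: the feedthrough from $p_k$ to $q_k$ is
\[
M_{22}=\begin{bmatrix} D^{(c)}_{qp} & 0\\[2pt] D^{(n)}_{qu}\,\sigma_i\,\tilde D^{(p)}_{\Delta} & D^{(n)}_{qp}\end{bmatrix},
\]
which is block \emph{lower} triangular because the controller argument never depends on the virtual-sensor output $p^{(n)}_k$. Since $\Delta_k$ is block diagonal, $N:=M_{22}\Delta_k$ is again block lower triangular, hence so is $R=I-N$, and therefore $\det R=\det\!\bigl(I-D^{(c)}_{qp}\Delta^{(c)}_k\bigr)\,\det\!\bigl(I-D^{(n)}_{qp}\Delta^{(n)}_k\bigr)$; the off-diagonal coupling $D^{(n)}_{qu}\,\sigma_i\,\tilde D^{(p)}_{\Delta}$ plays no role in invertibility.

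It then remains to certify each diagonal factor. For the network factor I would reuse the nilpotency argument from the LP-GRNN well-posedness proposition: $D^{(n)}_{qp}$ has the form $\bigl[\begin{smallmatrix}0&\tfrac12 G_{\tilde h,r}\\0&0\end{smallmatrix}\bigr]$, so for any block-diagonal $\Delta^{(n)}_k$ the product $D^{(n)}_{qp}\Delta^{(n)}_k$ is strictly block upper triangular of the same shape and squares to zero; hence $I-D^{(n)}_{qp}\Delta^{(n)}_k$ is invertible with inverse $I+D^{(n)}_{qp}\Delta^{(n)}_k$ and unit determinant. For the controller factor, $\det\!\bigl(I-D^{(c)}_{qp}\Delta^{(c)}_k\bigr)\neq 0$ for every admissible $\Delta^{(c)}_k$ follows from the well-posedness of the plant--controller SNOF established in \cite{10394670} (and in the common case of a plain saturated PI one has $D^{(c)}_{qp}=0$, so the factor is simply $1$). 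Combining the two, $\det R\neq 0$, so $R=I-M_{22}\Delta_k$ is nonsingular for every admissible $\Delta_k$; since a valid diagonal gain $\Delta_k$ exists for all $k$, Definition~\ref{def:SNOF} is satisfied and \eqref{eq:CPNNSNOF} is a well-posed SNOF.

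I expect the main obstacle to be bookkeeping rather than any deep difficulty: one must confirm that the interconnection genuinely preserves the block-lower-triangular pattern of $M_{22}$ — i.e. that \eqref{eq:CPNNSNOF} keeps $u^{(c)}_k$ exogenous so that no $p^{(n)}_k$-dependence leaks into $q^{(c)}_k$ — and one must handle the saturation cleanly, since, unlike $\tanh$, it is neither injective nor strictly sector-interior, so the diagonal-gain representation needs the mild standing assumption that the equilibrium lies in the linear region of the controller saturation (equivalently $y^{(c)}_{\min}<0<y^{(c)}_{\max}$). Everything else is a direct application of the already-proved LP-GRNN well-posedness result together with the plant--controller result of \cite{10394670}.
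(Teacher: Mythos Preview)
Your proposal is correct and follows essentially the same route as the paper: identify the block lower-triangular structure of $M_{22}$, factor $\det R$ into the controller and network contributions, and dispatch the network factor via the nilpotency of $D^{(n)}_{qp}\Delta^{(n)}_k$ already established for the stand-alone LP-GRNN. The paper differs only in that it asserts $D^{(c)}_{qp}=0$ outright (with a remark that a small-gain bound $\|D^{(c)}_{qp}\|_\infty<1$ would suffice otherwise), whereas you keep that factor general and appeal to \cite{10394670}; your handling of the saturation's non-injectivity via the standing assumption $y^{(c)}_{\min}<0<y^{(c)}_{\max}$ is in fact more careful than the paper's, which simply asserts the saturation is injective.
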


\begin{proof}
Write the holistic SNOF of \eqref{eq:CPNNSNOF} as
\[
\begin{bmatrix}x_{k+1}\\ q_k\end{bmatrix}
=
M\!
\begin{bmatrix}x_k\\ p_k\end{bmatrix},
\qquad
p_k=\Gamma(q_k),
\]
and partition \(p_k=[p^{(c)\!\top}_k,\,p^{(n)\!\top}_k]^\top\),
\(q_k=[q^{(c)\!\top}_k,\,q^{(n)\!\top}_k]^\top\).
The sub-matrix that couples the nonlinearities to themselves is

\[
M_{22}\;=\;
\begin{bmatrix}
 \underbrace{D^{(c)}_{qp}}_{=\,0} & 0 \\
 \boxed{H}=D^{(n)}_{qu}\,\sigma_i\tilde D^{(p)}_{\!\Delta}
          & \boxed{J}=D^{(n)}_{qp}
\end{bmatrix},
\]
Here \(D^{(c)}_{qp}=0\), because we chose a controller that
first sums (\(C_q^{(c)}x+B_u^{(c)}u\)) and then applies the output saturation; practically that means there is no algebraic loop from the saturation output straight back to its own input. Moreover, from \eqref{eq:lp-grnn-snof}, \(J\) has the explicit off-diagonal form
\[
J=\begin{bmatrix}0 & \tfrac12\,G_{\tilde h,r}\\[2pt]0 & 0\end{bmatrix},
\,\text{so }\;
\operatorname{dg}J=0.
\]
Let \(\Delta_k=\operatorname{dg}(\Delta_c,\Delta_n)\in[0,1]^{\,\times}\)
be the diagonal slope matrix (Definition \ref{def:SNOF}). Because \(J\) has a zero diagonal,  \(\operatorname{dg}(J\Delta_n)=0\). Therefore
\[
N=M_{22}\Delta_k=
\begin{bmatrix}
0 & 0\\
H\Delta_c & J\Delta_n
\end{bmatrix}
\]
is strictly block-lower-triangular and its main diagonal is
the zero matrix. The matrix \(R=I-N\) is therefore also block-lower-triangular with identity blocks on its main diagonal. Hence \(\det R=1\) and \(R\) is nonsingular for \emph{every} admissible \(\Delta_k\). Because a diagonal \(\Delta_k\) satisfying \(\Gamma(q_k)=\Delta_k q_k\) exists for all \(k\) (\(\tanh\) and the min-max-saturation are both injective and sector/slope-restricted)
and \(R\) is always invertible, Definition \ref{def:SNOF} is fulfilled: the closed-loop SNOF is well-posed.
\end{proof}

\begin{remark}
If, for a specific application, the controller has to include a (non-zero) diagonal feed-through \(D^{(c)}_{qp}\), we can
replace the assumption \(D^{(c)}_{qp}=0\) by the standard small-gain condition \(\|D^{(c)}_{qp}\|_\infty<1\) and the same block-triangular argument holds without change.
\end{remark}

We have therefore successfully transformed the closed-loop problem setup into a testable, well-posed SNOF structure including a gated recurrent network as virtual sensor. We can now apply Theorem~\ref{th:stabtheorem} to check for global asymptotic stability.  

\section{Evaluation}\label{sec:example}
In this section, we evaluate the above derived stability analysis. To this end, we first discuss, how the introduced LP-GRNN architectures compares on a benchmark time-series regression task to vanilla GRU and LSTM architectures. Afterwards, we provide an illustrative numerical example for the stability proof of a controlled linearized boiler plant, for which one output signal is estimated by a LP-GRNN based virtual sensor.

\subsection{LP-GRNN Benchmark}
We evaluate the performance of our proposed LP-GRNN neural network in a benchmark setting. Therefore, we compare our model structure against state-of-the-art methods using the NASA Turbofan Jet Engine dataset. The NASA Turbofan Jet Engine dataset \cite{pjh5-p424-23} is widely used for time-series prediction tasks. The dataset contains run-to-failure data from multiple engines. For each engine operational settings and sensor measurements are recorded over time. The target is to estimate the Remaining Useful Life (RUL) of an engine by utilizing its available data. We remark, that the purpose of the evaluation is a comparison of RNN architectures on a regression problem, not a state of the art performance on RUL problems.

For our experiments, we do a basic preprocessing by excluding redundant sensors and by normalizing the features to an interval of [-1, 1]. We cap the start value of the RUL to 120 and create a time series data set with a window length of 30 time steps. By utilizing the system failure time per engine, we can deduct the RUL for each time step. The RUL itself is also min-max scaled to an interval of [0,1]. We compare our proposed LP-GRNN model architecture with the following baseline architectures:
\begin{itemize}
    \item Multi-Layer Perceptron (MLP)
    \item Long Short-Term Memory (LSTM)
    \item Gated-Recurrent-Unit (GRU)
\end{itemize}

We utilize an NVIDIA GTX 4070 GPU for all experiments. All models are implemented using PyTorch. Their hyperparameters are optimized by utilizing the "optuna" library. We did 50 trials of 100 epochs of training with hyperparameter ranges for the recurrent architectures, i.e. a hidden Size range of $2^3$ to $2^{10}$ and a learning rate range of $1e^{-4}$ to $1e^{-2}$. For the MLP we decided to allow higher hidden sizes, to enable it to have a similar number of parameters as the other architectures resulting a hidden Size range of $2^8$ to $2^{15}$ and a learning rate range of $1e^{-4}$ to $1e^{-2}$.

We used the pytorch integrated learning rate scheduler "ReduceLROnPlateau" and the root-mean-squared-error (RMSE) loss for training. We note, that usually, for RUL prediction an overestimation is not desired and heavily penalized using a z-score metric, as proposed in the reference paper of the dataset \cite{4711414}. Contrary, for the tasks of virtual sensors, the MSE or RMSE is the main loss function to consider as we do not care about over- or underpredicting a signal. We will therefore solely analyze the RMSE performance of the models, as this is the relevant metric for our case.

To ensure robust performance estimates under stochastic training conditions, we conducted 50 independent trials for each architecture. An automated outlier detection using a consensus of IQR, Z-Score, and MAD methods was applied to filter anomalous training trajectories before aggregation. Table \ref{tab:benchmark_results} summarizes the predictive accuracy on the test set. The proposed LP-GRNN achieves a mean Root Mean Squared Error (RMSE) of 20.69 with a standard deviation of 2.99. While this average error is marginally lower than that of the vanilla GRU (21.71), the performance difference was confirmed to be \textit{statistically insignificant} (Wilcoxon signed-rank test, $p = 0.24 > 0.05$). These results empirically validate the design choice made in Section \ref{sec:LP-GRNN}. Despite replacing the dynamic Hadamard gating with the static affine update vector $\alpha$, the LP-GRNN maintains a predictive accuracy on par with the vanilla GRU. This suggests that for many practical control and sensing tasks, the learned time-constants provided by $\alpha$ are sufficient to capture the system dynamics, making the LP-GRNN a viable, stability-certifiable alternative to standard gated RNNs.

\begin{table}[htb]
    \centering
    \caption{Performance Comparison (50 Trials, Outliers Removed)}
    \label{tab:benchmark_results}
    \begin{tabular}{lcc}
        \hline
        \textbf{Method} & \textbf{Mean RMSE} & \textbf{Std. Dev.} \\
        \hline
        MLP     & 26.77 & 2.66 \\
        LSTM    & 23.21 & 3.09 \\
        GRU     & 21.71 & 3.79 \\
        \textbf{LP-GRNN} & \textbf{20.69} & \textbf{2.99} \\
        \hline
    \end{tabular}
\end{table}

The performance distribution across all completed trials is visualized in Fig. \ref{fig:performance_comparison}. Notably, the LP-GRNN exhibits a competitive median performance and a compact interquartile range, confirming its robustness on par with state-of-the-art recurrent models. 

\begin{figure}[!t]
\centering
\includegraphics[width=\columnwidth]{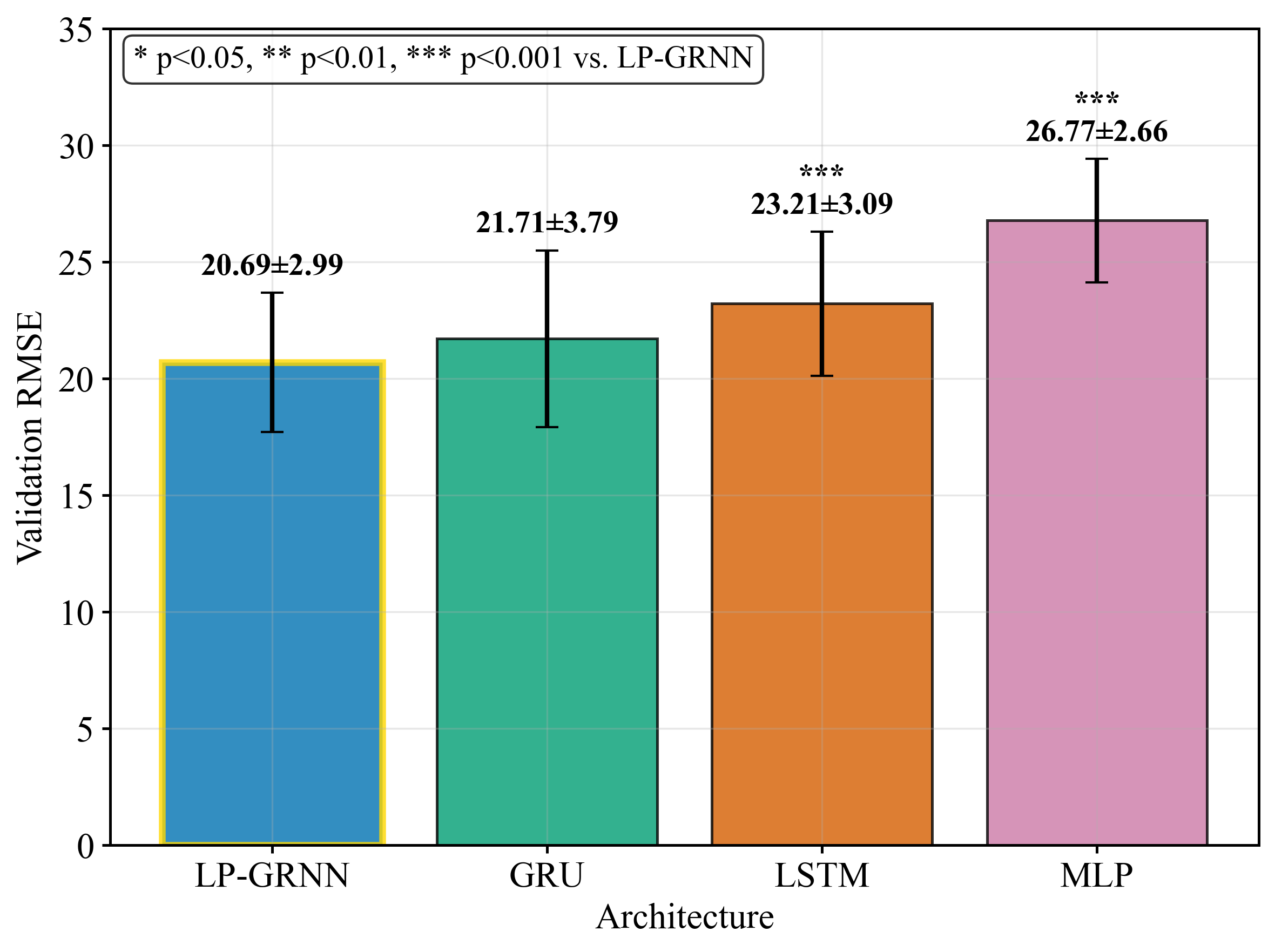}
\caption{Performance comparison of RNN architectures on the CMAPSS dataset. Bars represent mean RMSE with error bars showing the interquartile range (IQR).}
\label{fig:performance_comparison}
\end{figure}

\begin{figure}[!t]
\centering
\includegraphics[width=\columnwidth]{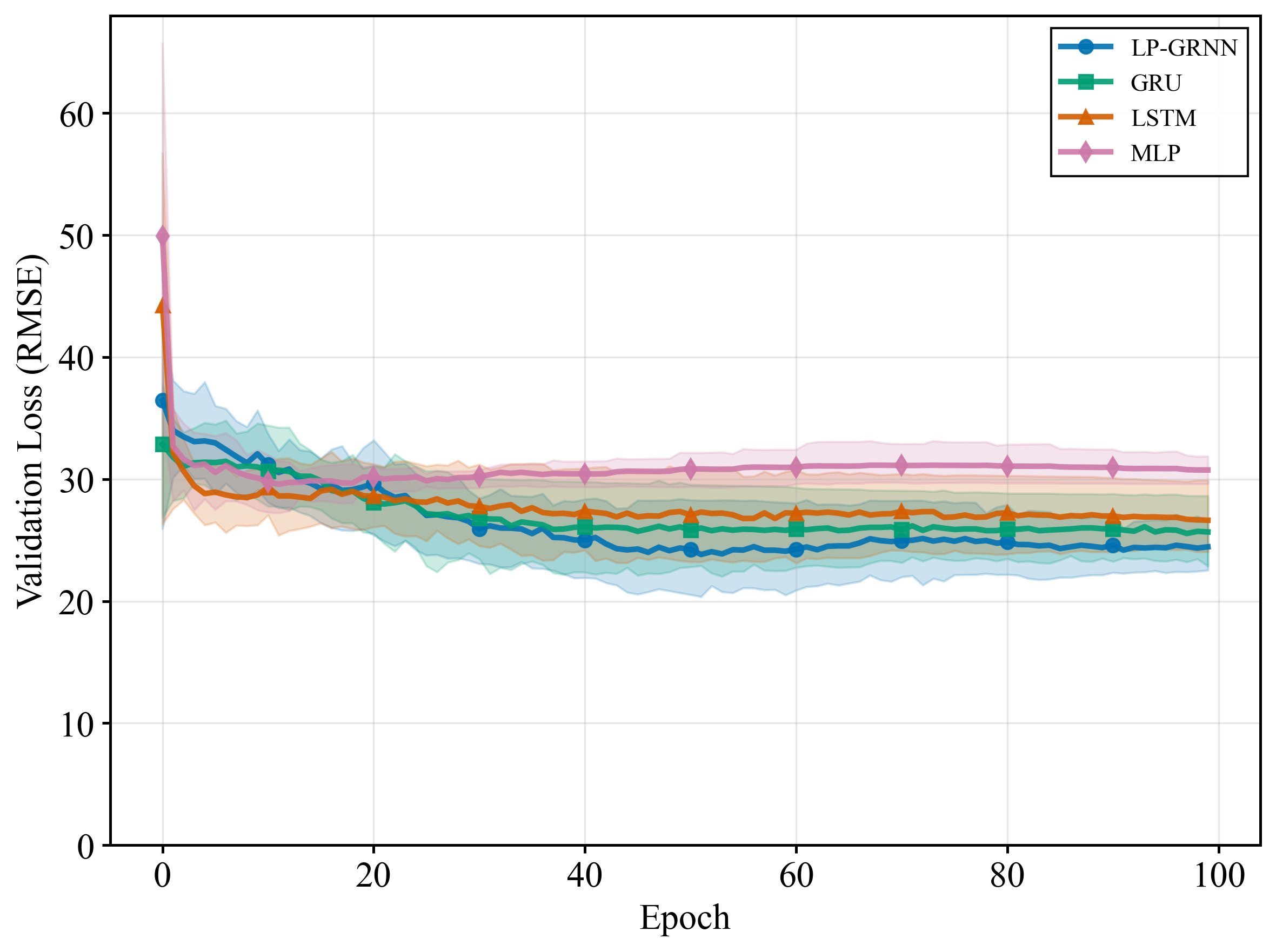}
\caption{Training convergence analysis showing validation loss across epochs. Solid lines represent mean validation loss with shaded regions indicating the interquartile range (IQR).}
\label{fig:convergence_analysis}
\end{figure}

We also analyzed the dynamic properties. Fig. \ref{fig:gradient_norm} illustrates the evolution of the gradient norms during training. As detailed in Table \ref{tab:dynamics_metrics}, the LP-GRNN maintains the lowest mean gradient norm of 24.45, representing a reduction of approximately 27\% compared to the standard GRU (33.46) and 44\% compared to the MLP (43.37). This significant reduction in gradient magnitude implies a more stable optimization trajectory, rendering the proposed architecture less prone to sudden, destabilizing updates. 

\begin{figure}[!t]
\centering
\includegraphics[width=\columnwidth]{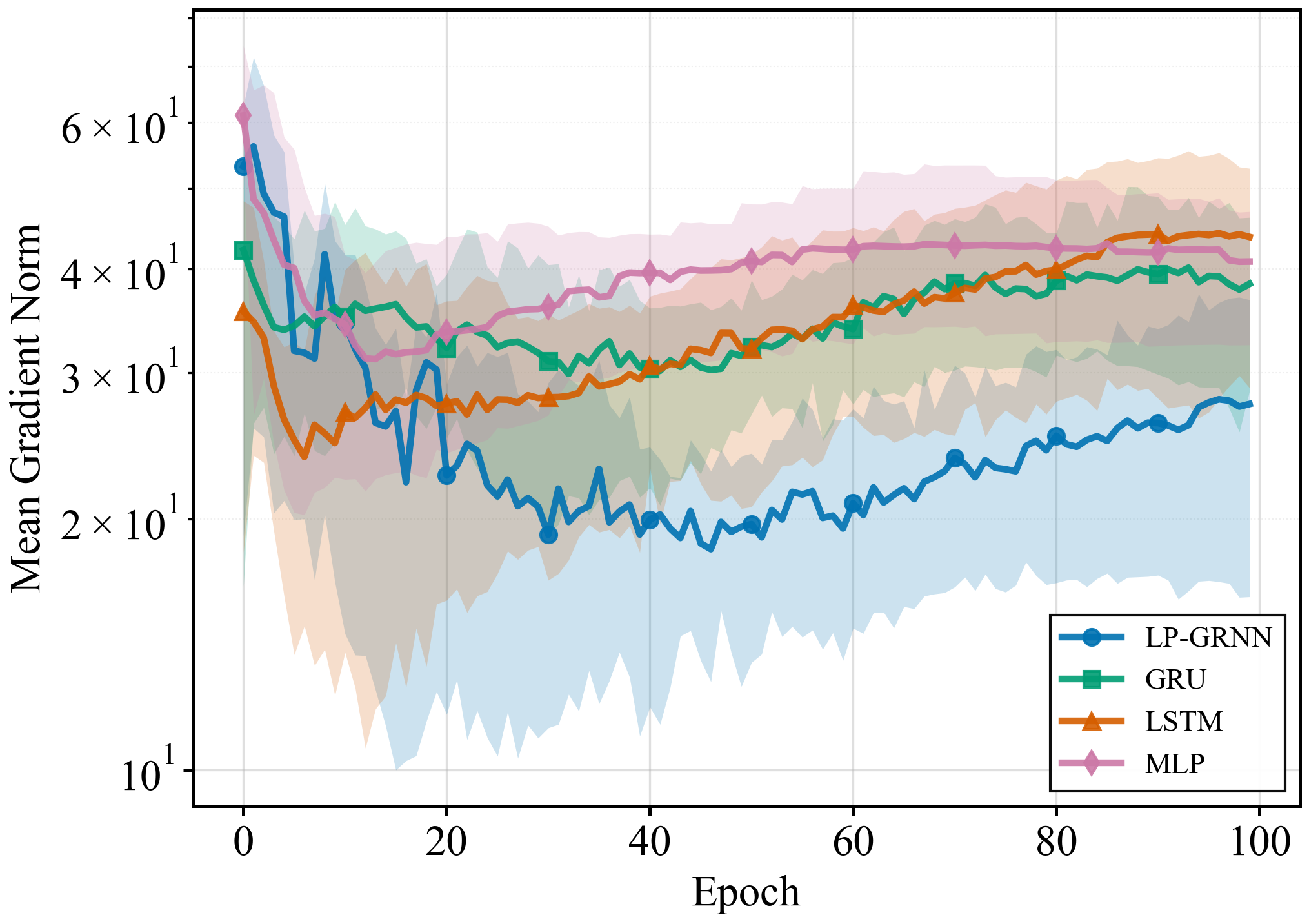}
\caption{Mean gradient norms across training epochs.}\label{fig:gradient_norm}
\end{figure}

Furthermore, we assessed whether the simplified affine gating structure induces vanishing gradients. The effective memory length, measured via Backpropagation Through Time (BPTT) gradient sensitivity, is reported in Table \ref{tab:dynamics_metrics}. The LP-GRNN retains an effective memory of 46.0 steps, which is comparable to the vanilla GRU's 49.3 steps. This confirms that the LP-GRNN achieves SNOF-compatibility without sacrificing the ability to capture long-term dependencies.

\begin{table}[htb]
    \centering
    \caption{Stability and Dynamics Metrics (Mean over Trials)}
    \label{tab:dynamics_metrics}
    \begin{tabular}{lcc}
        \hline
        \textbf{Method} & \textbf{Mean Grad. Norm} & \textbf{Eff. Memory (Steps)} \\
        \hline
        MLP     & 43.37 & -\\
        LSTM    & 32.54 & 48.7 \\
        GRU     & 33.46 & 49.3 \\
        \textbf{LP-GRNN} & \textbf{24.45} & \textbf{46.0} \\
        \hline
    \end{tabular}
\end{table}

Collectively, the analysis confirms that the LP-GRNN exhibits well-behaved training dynamics, effectively mitigating both exploding and vanishing gradients despite the architectural constraints.

\subsection{Illustrative Example}
The example detailed in this paper is closely aligned to the one demonstrated in \cite{10394670}. It is based on the linear plant proposed by \cite{Tan2005AnalysisAC}. The linearized model itself is derived from a nonlinear boiler plant, as described in \cite{strm1987DynamicMF}. The system in \cite{Tan2005AnalysisAC} has been slightly modified to align with the current problem. We model the output signal $y_3$, which is therefore removed from the output. Instead we directly output the state $x_3$. This leads to the following plant system:
\begin{equation}
\begin{aligned} 
   &A^{(p)} = 
    \left[\begin{smallmatrix*}[r]
    -0.0025 & 0 & 0 \\
    0.0694 & -0.1 & 0 \\
    -0.0067 & 0 & 0
    \end{smallmatrix*}\right], \quad
    B^{(p)} = \left[\begin{smallmatrix*}[r]
    0.9 & -0.349 & -0.15 \\
    0 & 14.155 & 0 \\
    0 & -1.398 & 1.659
    \end{smallmatrix*}\right], \\
   &C_{\Delta}^{(p)} = I_{v}, \quad
   D_{\Delta}^{(p)} = 0_{v \times s}.
\end{aligned}
\end{equation}

The plant is discretized using the zero-order hold method and excited by random input signals bounded by the actuator constraints as described in \cite{strm1987DynamicMF}. Initial values correspond to operating point \#4 from \cite{Tan2005AnalysisAC}. For the neural network we use the plant states and the plant input signals as the input features. The target signal $y_3$ is recorded as ground truth for supervised learning. The architecture of the neural network is based on our LP-GRNN. We use one hidden layer with three hidden neurons and a linear output layer. This results matrices stated in \eqref{eq:lp-grnn-matrices}

\begin{figure*}[!t]
    \normalsize
    \begin{equation}\label{eq:lp-grnn-matrices}
    \begin{aligned}
    A^{(n)} &= I_n - \alpha, \quad
    B^{(n)}_{p} = \left[\begin{smallmatrix} \alpha & 0_{n\times n} \end{smallmatrix}\right], \quad
    B^{(n)}_{u} = 0_{n\times m} , \quad
    \beta_x^{(n)} = 0_{n\times 1},  \quad
    C^{(n)}_{q} \thickapprox
    \left[\begin{smallmatrix*}[r]
    0.1427 & -0.2482 & 0.1805 \\
    0.6515 & -0.2854 & -0.4809 \\
    0.3640 & -0.1619 & 0.5506 \\
    0.2505 & -0.0978 & 0.1890 \\
    0.2372 & 0.2452 & 0.3117 \\
    -0.3550 & 0.2572 & 0.0626 \\
    \end{smallmatrix*}\right], \\
    D^{(n)}_{qp} &\thickapprox \left[\begin{smallmatrix*}[r]
    0 & 0 & 0 & -0.3197 & -0.4381 & 0.1456 \\
    0 & 0 & 0 & -0.2301 & -0.2483 & 0.1766 \\
    0 & 0 & 0 & 0.1071 & 0.1560 & -0.2065 \\
    0 & 0 & 0 & 0 & 0 & 0 \\
    0 & 0 & 0 & 0 & 0 & 0 \\
    0 & 0 & 0 & 0 & 0 & 0 \\
    \end{smallmatrix*}\right], \quad
    D^{(n)}_{qu} \thickapprox \left[\begin{smallmatrix*}[r]
    -0.0130 & 0.3061 & 0.8774 & -0.0251 & -0.1597 & -0.0432 \\
    -0.1069 & 0.1355 & 0.7464 & 0.1768 & 0.0714 & 0.2118 \\
    -0.0239 & 0.1095 & -0.0679 & 0.3225 & -0.0068 & -0.1522 \\
    0.0064 & 0.0097 & -0.5797 & -0.0085 & 0.0561 & 0.0652 \\
    -0.1427 & -0.1356 & -0.5616 & 0.0216 & -0.1105 & -0.0130 \\
    -0.1405 & 0.2245 & 0.2678 & -0.1161 & -0.1553 & -0.2022 \\
    \end{smallmatrix*}\right], \quad
    \beta_q^{(n)} \thickapprox \left[\begin{smallmatrix*}[r]
    -0.6407 \\ -0.3345 \\ 0.5554 \\ -0.0465 \\ 0.2233 \\ -0.1886 \\
    \end{smallmatrix*}\right], \\
    C^{(n)}_{y} &\thickapprox \left[\begin{smallmatrix*}[r]
    0.2879 \\ 0.2610 \\ 0.1536
    \end{smallmatrix*}\right]^T, \quad
    D^{(n)}_{yp} \thickapprox \left[\begin{smallmatrix*}[r]
    1.2667 \\ 1.2937 \\ 1.4011 \\ 0 \\ 0 \\ 0
    \end{smallmatrix*}\right]^T, \quad
    D^{(n)}_{yu} = 0_{1\times m}, \quad
    \beta_o^{(n)} \thickapprox 0.3077, \quad \alpha = \left[\begin{smallmatrix*}[r]
    0.4503 & 0 & 0 \\
    0 & 0.4090 & 0 \\
    0 & 0 & 0.5541 \\
    \end{smallmatrix*}\right].
    \end{aligned}
    \end{equation}
    \hrulefill
    \vspace*{4pt}
\end{figure*}

Between plant and virtual sensor, as well as between virtual sensor and controller, scaling is implemented as outlined in Section \ref{sec:pre}. A min-max scaler is utilized for this illustrative example. The factors $\sigma_i$, $\theta_i$, $\sigma_o$, and $\theta_o$ are calculated based on the minimum and maximum values present in the training data:

\begin{equation}
\begin{aligned}
\sigma_i &\thickapprox
\left[\begin{smallmatrix*}[c]
0.0023 & 0 & 0 & 0 & 0 & 0\\
0 & 0.0027 & 0 & 0 & 0 & 0 \\
0 & 0 & 0.0004 & 0 & 0 & 0 \\
0 & 0 & 0 & 1 & 0 & 0\\
0 & 0 & 0 & 0 & 1 & 0 \\
0 & 0 & 0 & 0 & 0 & 1
\end{smallmatrix*}\right], \quad
\theta_i \thickapprox
\left[\begin{smallmatrix*}[r]
0.3279 \\
0.1920 \\
0.5734 \\
0 \\
0 \\
0
\end{smallmatrix*}\right], \\
\sigma_o &= 10, \quad 
\theta_o = -5
\end{aligned}
\end{equation}

Finally, a continuous PI-controller from \cite{10394670} automatically designed by the tuning algorithm from \cite{MATLAB:2019} is used. After its transformation to state-space and a discritization via the zero-order method, its matrices are given by

\begin{equation}
\begin{aligned}
A^{(c)} &= 
\left[\begin{smallmatrix}
1 & 0 & 0 \\
0 & 1 & 0 \\
0 & 0 & 1 \\
\end{smallmatrix}\right], \quad
B^{(c)}_{u} \thickapprox
\left[\begin{smallmatrix*}[r]
0.2 & 0 & 0 \\
0 & 6.3 & 0 \\
0 & 0 & 0.4 \\
\end{smallmatrix*}\right] \times 10^{-2}, \\
C^{(c)}_{q} &\thickapprox
\left[\begin{smallmatrix*}[r]
0.1 & 0 & 0 \\
0 & 4.1 & 0 \\
0 & 0 & 0.6
\end{smallmatrix*}\right] \times 10^{-2}, \quad
D^{(c)}_{qu} \thickapprox 
\left[\begin{smallmatrix*}[r]
0.1 & 0 & 0 \\
0 & 0.8 & 0 \\
0 & 0 & 3.2
\end{smallmatrix*}\right] \times 10^{-2}.
\end{aligned}
\end{equation}

As in \cite{10394670}, the controller output is constrained by the nonlinear operator detailed in \eqref{eq:CSNOF}, with minimum values $y_{\text{min}}^{(c)} = 0_s$ and maximum values $y_{\text{max}}^{(c)} = 1_s$. The SNOF, as constructed according to \eqref{eq:CPNNSNOF}, can be shown to be well-formed, as it satisfies all conditions outlined in Section \ref{sec:pre}.

The system can be assessed for global asymptotic stability utilizing the Theorem \ref{th:stabtheorem}.The convex optimization problem was solved in Python using CVXPY \cite{JMLR:v17:15-408} with the SCS solver \cite{o2016conic}, configured as follows:
\begin{itemize}
    \item $\varepsilon_{\text{abs}} = 1\times10^{-5}$
    \item $\varepsilon_{\text{rel}} = 1\times10^{-5}$
    \item $\text{max\_iters} = 10^5.$
\end{itemize}

Thus, within the numerical tolerances of the solver, the system is confirmed to be globally asymptotically stable. The closed-loop performance is visualized in Fig.~\ref{fig:closed_loop_response}. 

\begin{figure*}[htb]
  \centering
  \includegraphics[width=2\columnwidth]{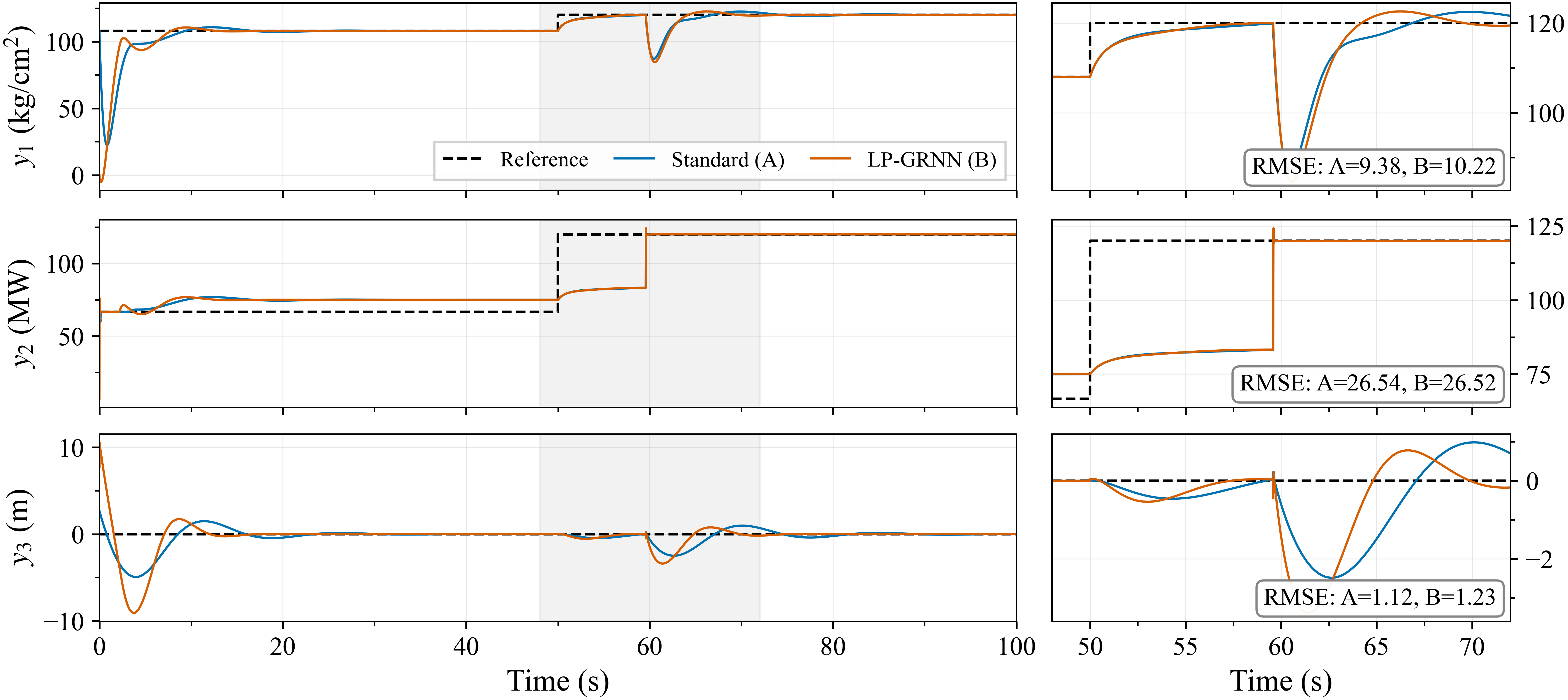}
  \caption{Closed-loop response of the linearized boiler system comparing standard control (A) with LP-GRNN virtual sensor-based control~(B).Left column: full simulation with setpoint changes at $t=50$\,s. Right column: magnified view of the transient response. Output $y_3$ (drum level) is estimated by the LP-GRNN virtual sensor in configuration (B), while $y_1$ and $y_2$ are directly measured in both configurations.}
  \label{fig:closed_loop_response}
\end{figure*}

\begin{table}[htb]
    \centering
    \caption{Step Response Performance Metrics}
    \label{tab:step_response_metrics}
    \begin{tabular}{llrr}
        \hline
        \textbf{Output} & \textbf{Metric} & \textbf{Standard (A)} & \textbf{LP-GRNN (B)} \\
        \hline
        $y1$ & RMSE & 9.3831 & 10.2226 \\
        & IAE & 104.59 & 106.81 \\
        & Overshoot (\%) & 20.67 & 21.60 \\
        & SS Error & 1.4705 & 1.4020 \\
        \hline
        $y2$ & RMSE & 26.5367 & 26.5204 \\
        & IAE & 367.14 & 367.18 \\
        & Settling Time (s) & 9.60 & 9.62 \\
        & Overshoot (\%) & 7.18 & 7.90 \\
        & SS Error & 0.0019 & 0.0014 \\
        \hline
        $y3$ & RMSE & 1.1173 & 1.2350 \\
        & IAE & 15.91 & 15.23 \\
        & SS Error & 0.5711 & 0.4728 \\
        \hline
    \end{tabular}
\end{table}

Fig.~\ref{fig:closed_loop_response} compares the closed-loop behavior of both control configurations. The LP-GRNN-based system (B) achieves tracking performance comparable to the standard controller (A), with similar RMSE values for all outputs. Notably, Table \ref{tab:step_response_metrics} shows that the virtually sensed output $y_3$ shows slightly improved steady-state accuracy (SS error: 0.47 vs.\ 0.57), demonstrating that the closed-loop system containing the virtual sensor does not compromise control quality and exhibits dynamics and stability comparable to the original configuration.

\section{Conclusion}\label{sec:conclusion}
This paper addresses the certification of stability for control systems that incorporate gated recurrent–network–based virtual sensors. We demonstrate that the Hadamard gate in GRU/LSTM cells violates the path‐independent storage‐function requirement, making standard Lyapunov tests inapplicable. To overcome this, we propose the LP-GRNN, which replaces the multiplicative gate with an affine update while preserving temporal memory. The LP-GRNN admits an exact Standard Nonlinear Operator Form with static, component-wise $\tanh$ nonlinearities that satisfy sector and slope bounds. Thus, we could formulate a well-posed SNOF for the closed-loop system combining the LP-GRNN, the plant, and a constrained controller and apply the global-asymptotic-stability theorem of \cite{kim2009robust}.

Empirical evaluation shows that the LP-GRNN matches the prediction accuracy of standard GRUs/LSTMs on a time-series benchmark, and a linearized boiler example illustrates how the resulting LMI can be solved numerically to verify stability. The main limitations are a reduction in architectural flexibility—potentially requiring larger LP-GRNNs for complex tasks—and quadratic growth of the LMI with hidden-state size, which may hinder scalability. Additionally, the current analysis assumes ideal conditions without delays or sampling mismatches, which would require extensions for practical deployment.

For future work, the approach presented in this article can be also applied to stacked LP‑GRNNs. One could also swap the fixed PI loop for an LP‑compatible, learned controller and analyze the joint plant‑sensor‑controller system in the same way.  Finally, work on the quadratic LMI scalability issue is needed. One could exploit sparsity—using chordal decomposition \cite{XUE2024111487} or use block‑diagonal factorization so each layer is verified with a much smaller LMI that can be solved in parallel \cite{8264648}.

\section*{Acknowledgment}
This work was not supported by any organization.

\bibliographystyle{IEEEtran}
\bibliography{refs}

\end{document}